\DeclareMathOperator*{\argmin}{arg\,min}
\newtheorem{theorem}{Theorem}[section]
\newtheorem{corollary}{Corollary}[theorem]
\newtheorem{lemma}[theorem]{Lemma}
\begin{document}
\title{\vspace{-36pt}}
\title{Hypergames and Cyber-Physical Security for Control Systems}
\author{Craig Bakker \and
Arnab Bhattacharya \and
Samrat Chatterjee \and
Draguna L. Vrabie}
\maketitle

\begin{abstract}
The identification of the Stuxnet worm in 2010 provided a highly publicized example of a cyber attack used to damage an industrial control system physically.  This raised public awareness about the possibility of similar attacks against other industrial targets -- including critical infrastructure.  In this paper, we use hypergames to analyze how adversarial perturbations, like those used by Stuxnet, can be used to manipulate a system that employs optimal control.  Hypergames form an extension of game theory that enables us to model strategic interactions where the players may have significantly different perceptions of the game(s) they are playing.  Past work with hypergames has been limited to relatively simple interactions consisting of a small set of discrete choices for each player, but here, we apply hypergames to larger systems with continuous variables.  We find that manipulating constraints can be a more effective attacker strategy than directly manipulating objective function parameters.  Moreover, the attacker need not change the underlying system to carry out a successful attack -- it may be sufficient to deceive the defender controlling the system.  It is possible to scale our approach up to even larger systems, but the ability to do so will depend on the characteristics of the system in question, and we identify several characteristics that will make those systems amenable to hypergame analysis.
\end{abstract}

\section{Introduction}
\subsection{Stuxnet and Cyber-Physical Security}

The Stuxnet worm was identified in 2010 as a piece of malware that targeted a very specific Industrial Control System (ICS) -- namely, uranium enrichment infrastructure \cite{nourian18jsr,symantec11tr}.  This may not have been the first cyber attack to cause physical damage to an ICS, but it was highly publicized.  As such, Stuxnet brought the potential physical consequences of cyber attacks into the public eye.

Stuxnet was highly sophisticated.  Part of its sophistication lay in its strategy for obtaining access to its targets: it exploited four 0-day vulnerabilities, compromised two digital certificates, and propagated itself through networks and removable devices \cite{symantec11tr}.  Once it reached a control system, it continued to act stealthily.  Stuxnet fed fake data to the ICS to disguise malicious actions \cite{symantec11tr,howser14cp} and limited its attacks to avoid detection \cite{karnouskos11cp}.  The goal of Stuxnet was not to cause catastrophic failure, which would have been easier.  Rather, it exploited the physical vulnerabilities as well as the cyber vulnerabilities inherent in the ICS.

Stuxnet forced analysts to consider the risk associated with these kinds of cyber attacks.  If we understand risk as the product of consequence, vulnerability, and threat, we can address each of those components separately.  The potential for significant consequence is clear: many industrial processes, including critical infrastructure systems (e.g., the power grid), rely on Supervisory Control and Data Acquisition (SCADA) software and ICSs.  These systems are also vulnerable.  Updates can be risky because they may cause previously functional systems to produce new errors \cite{karnouskos11cp}, and even if this is not the case, taking the system in question offline to perform the updates may be difficult or infeasible \cite{nourian18jsr}.  There is a tradeoff between security and ease of use, and a knowledge gap between cyber security specialists and control engineers can compound this.

There are two more factors that increase the vulnerability of ICSs to cyber attack.  Firstly, industrial systems are often serviced by outside contractors, and the devices (computers, USB drives, etc.) used by those contractors can provide a malware vector that bypasses traditional cyber security measures such as air gaps \cite{symantec11tr}.  Secondly, industry standardization also reduces uncertainty for potential attackers; complexity, heterogeneity, and uncertainty make it more difficult for attackers to design successful attacks.

Most of the uncertainty regarding the risk of cyber attacks on ICSs has to do with threat.  The old consensus was that these systems were too specialized to attack \cite{karnouskos11cp}.  Stuxnet, for example, required a great deal of specialized knowledge about the control systems in question \cite{symantec11tr}.  In the case of terrorism, for example, it is easier to build a bomb than to write code that will cause comparable physical destruction.  However, Stuxnet showed that these kinds of attacks are possible for those determined to carry them out.

\subsection{Hypergames}

Game theory is a branch of mathematics that looks at strategic interactions between rational entities.  It has seen considerable use in economic \cite{roth02jsr} and security \cite{sandler2003terrorism} applications.  A fundamental premise of strategic games in game theory is that all of the players are seeing and playing the same game.  This is not always true, though.  Belief manipulation plays a key role in some strategic interactions.  In other cases, not all player objectives may be common knowledge.  This necessitates understanding more completely players' perceptions of the game(s) they are playing; one way to model this is through hypergames \cite{bennett80jsr}.  

Hypergames allow players to play different games and can account for differences in their perceptions of the same game without considering uncertainty probabilistically.  For example, one group of players may distinguish between certain actions while another group considers those actions all to be identical.  On the other hand, some players may not be aware of the existence of other players in the game (or may not be aware of all of those other players' actions).  Hypergames essentially enable us to extend the concept of rationality to a bounded information situation.  This, in turn, makes it possible for a given player to exploit another player's misperceptions.  In analyzing the (potentially) different games that each player is playing, though, we are still able to apply game theoretic concepts and thus build on existing game theory research.  We can describe a two-player game as 

\begin{gather}
G_{A,B} = \left(\mathcal{P},\mathcal{S},\mathcal{U}\right)
\end{gather}

\begin{gather}
\mathcal{P} = \left\{A,B\right\} \\
\mathcal{S} = \left\{S_A,S_B\right\} \\
\mathcal{U} = \left\{u_A,u_B\right\}
\end{gather}

\noindent where $A$ and $B$ are the players, $S_A$ and $S_B$ are those players' respective action spaces, and $u_A,u_B: S_A \times S_B \rightarrow \Re$ are their respective payoff functions, which provide a partial ordering over $S_A \times S_B$ for each player.  We can describe a first level hypergame as

\begin{gather}
H_{A,B} \left(A,B,G_{A,B}\right) = \left\{ p\left(A,G_{A,B}\right),p\left(B,G_{A,B}\right)\right\}
\end{gather}

\noindent where $p\left(A,G_{A,B}\right)$ is $A$'s perception of $G_{A,B}$.  The condition $p\left(A,G_{A,B}\right) \neq p\left(B,G_{A,B}\right)$ could be caused by discrepancies such as $p \left(A,\left\{A,B\right\}\right) = \left\{A\right\}$, which would indicate that $A$ is not aware of $B$'s presence.  We can also describe perceptions about perceptions.  For example, $p\left(AB,u_A\right)$ is $A$'s perception of $B$'s perception of $A$'s utility function.  For a first level hypergame, there are misperceptions, but the players are not aware of those misperceptions:

\begin{gather}
p\left(A,G_{A,B}\right) \neq p\left(B,G_{A,B}\right) \\
p\left(AB,G_{A,B}\right) = p\left(A,G_{A,B}\right) \\
p\left(BA,G_{A,B}\right) = p\left(B,G_{A,B}\right)
\end{gather}

For a second level hypergame, at least one player is aware of the misperceptions.  For example, if $A$ is aware of the misperceptions but $B$ is not, we have

\begin{gather}
p\left(AB,G_{A,B}\right) \neq p\left(A,G_{A,B}\right) \\
p\left(BA,G_{A,B}\right) = p\left(B,G_{A,B}\right)
\end{gather}

Player $B$ then plays $p\left(B,G_{A,B}\right)$ while $A$ plays the hypergame

\begin{gather}
H_{A,AB} \left(A,AB,G_{A,B}\right) = \left\{p\left(A,G_{A,B}\right),p\left(AB,G_{A,B}\right)\right\}
\end{gather}

The overall solution to a hypergame can then be calculated by correctly aggregating the equilibrium solutions to the players' perceived (hyper)games.  In a first level hypergame, for example, the equilibrium solution is $\left(x_A,x_B\right)$, where $x_A$ is $A$'s equilibrium strategy for the game $p\left(A,G_{A,B}\right)$ and $x_B$ is $B$'s equilibrium strategy for the game $p\left(B,G_{A,B}\right)$.  For the second level hypergame described above, $x_A$ would be $A$'s optimal strategy for $H_{A,AB}\left(A,AB,G_{A,B}\right)$, while $x_B$ would still be $B$'s equilibrium strategy for $p\left(B,G_{A,B}\right)$.  These concepts extend naturally to higher level hypergames and additional players.  See Kovach et al. \cite{kovach15jsr} and Gutierrez et al. \cite{gutierrez15jsr} for more details.

Approaches such as reflexive control \cite{novikov14bk}, Mirage Equilibria \cite{sakovics01jsr}, and $k$-level reasoning \cite{camerer04jsr,stahl95jsr} have been applied to systems that may not have common knowledge (and thereby incorporate a kind of bounded rationality).  Despite some differences in notation and nomenclature, these approaches all incorporate hierarchies of beliefs (e.g., Player 1's beliefs about Player 2's beliefs).  However, the first two, along with hypergames, differ somewhat from $k$-level reasoning with respect to the accuracy of the player perceptions.  In $k$-level reasoning, the focus is on the degree to which one player anticipates another.  In principle, this approach does not rule out the possibility that a given player might misperceive the nature of the game (payoff structure, available actions, etc.), but in practice, this is not a key consideration.  For hypergames, this \emph{is} a key consideration.  The concept of a subjective game (i.e., $p\left(A,G_{A,B}\right)$) is central to hypergame analysis, and belief hierarchies exist to support that; the same is true for reflexive control and Mirage Equilibria.

For example, a key hypergame result is that hypergame equilibrium solutions can be stable under misperceptions \cite{sasaki08cp}.  In these cases, each player does what the other players expect -- which can happen even when the players' perceptions differ or are erroneous -- and thus there is no motivation for players to update their perceptions.  This is similar to a conjectural equilibrium \cite{sakovics01jsr} in that players do not know what they do not know.  In a repeated hypergame context, then, these equilibria are stable, and extending belief hierarchies to higher and higher levels would not necessarily change that.  Using the formalism we employed previously, a hypergame equilibrium is stable if $p\left(A,x_B\right) = x_B$ and $p\left(B,x_A\right) = x_A$,  which need not imply that $p\left(A,G_{A,B}\right) = p\left(B,G_{A,B}\right)$.

Hypergames have been used to study water resource management \cite{okada85jsr}, supply chain relationships \cite{graham92jsr}, and cyber attacks \cite{house10jsr}.  Some research has also looked at connecting hypergames with other branches of game theory.  Kanazawa et al. \cite{kanazawa07jsr} studied an evolutionary version of hypergames.  This included calculating evolutionarily stable strategies and defining hypergame replicator dynamics.  Sasaki and Kijima \cite{sasaki12jsr,sasaki16jsr} showed how hypergames can be reformulated as Bayesian games (at least in some cases).  In doing so, though, they identified reasons why it may be advantageous to avoid that reformulation.  Firstly, hypergames can provide a simpler and more natural epistemic representation of the game's players; the treatment of unawareness, for example, can be more convincing than in the Bayesian case.  Secondly, there are some hypergame solution concepts, such as stability under misperception, that do not map to the Bayesian reformulation.  The topic of misperception has also led to research into how repeated hypergames can be used to improve or update perceptions \cite{sasaki08cp}.  House and Cybenko used both hidden Markov models and a maximum entropy approach \cite{house10jsr}.  Takahashi et al., on the other hand, used a genetic algorithm \cite{takahashi99cp}.  Generally speaking, though, the hypergame literature is relatively small; Kovach et al. provided a review of the field \cite{kovach15jsr}.  Moreover, all of the examples that we have seen have involved hypergames with a relatively small number of discrete choices.  Solving for the equilibrium solutions, then, has involved hand calculations and/or exhaustive enumeration.

\subsection{Aim and Motivation}

The goal of this paper is to show how hypergames can be used in optimal control where the control system in question is subject to adversarial perturbations and to demonstrate how this analysis can apply to Stuxnet-like attacks.  This research contributes to ongoing work in optimal control by showing how manipulating controller perceptions can function as an attacker strategy; the attacker actually uses the control system against itself.  These analyses then highlight weaknesses in the control system -- weaknesses that are vulnerable to attack even if they might not be vulnerable to random events.  This research also advances hypergame research in two ways.  Firstly, it brings hypergames to bear on a new application area (i.e., optimal control) -- one rather different than the examples in previous papers.  Secondly, it applies hypergame concepts to systems of significantly greater complexity than previous hypergame research has used.  The examples in this paper have continuous variables, and the second example is a discrete-time optimal control problem with time-varying variables.  Both problems, moreover, require using numerical optimization methods to find hypergame equilibria.  Taking hypergames to this level of complexity makes the hypergame concept more viable as a tool for analyzing real systems and not just toy problems.

This kind of investigation is highly relevant to addressing Stuxnet-like attacks from a control perspective.  Leaving aside IT-based cyber security concerns, let us assume that an attacker has access to at least part of an ICS.  Can we then characterize the kind of damage that that attacker could produce, and can we design control procedures that minimize that damage?  In this paper, we focus primarily on the former but touch upon the latter; we intend to address the latter more fully in later work.  ICSs provide examples of (potentially high-impact) cyber-physical systems where control provides the connection between the `cyber' and 'physical' components.  The idea behind this research, then, is not to replace traditional cyber security methods but rather to recognize that control systems can be used to provide another layer of robustness to attack if those control systems are designed to do so and that the physical weaknesses accessible through cyber means can be analyzed by looking at the control model.

\section{Static Problem Formulation}

To demonstrate some of the concepts of this paper, we consider a static optimization problem constrained within an operating envelope, which is represented as an inequality constraint: 

\begin{gather}
\min \limits_{u} J\left(u,\theta\right) \\ 
g(u,c)\leq 0 
\label{operating constraint}
\end{gather}

\noindent where $u$ is the vector of decision variables, $\theta$ is the vector of objective function parameters, and $c$ is the vector of operating envelope parameters.  Note that $g$ may be a vector of constraint equations $g_l$, $l = 1,2,\ldots$, in which case (\ref{operating constraint}) is equivalent to $g_l\left(u,c\right) \leq 0 \ \forall \ l$.

\subsection{Objective Function Manipulation}

Here, we will consider a situation where the attacker can manipulate the defender's observation of objective function parameters; $\hat{\theta} = \theta + \Delta \theta$, where the vector $\hat{\theta}$ denotes the quantities that the defender observes.  The attacker optimization is then

\begin{gather}
\max \limits_{\Delta \theta}  J\left(\hat{u}^*,\theta\right) 
\label{att start} \\
\frac{1}{2} \left\|\Delta \theta\right\|^2 \leq \delta_{\theta,max} 
\label{att theta cons} \\
\hat{u}^* = \argmin_{\hat{u}} \left( J\left(\hat{u},\hat{\theta}\right) : g(\hat{u},c) \leq 0 \right)
\label{def opt}
\end{gather}

\noindent where (\ref{def opt}) describe what the attacker expects the defender's optimization to be and (\ref{att theta cons}) is a constraint on the attacker's manipulations, which is a reasonable assumption in a context of limited attack budgets or when attack detection mechanisms are present in the system.  This constitutes a second level hypergame.  If $A$ represents the attacker and $D$ represents the defender, we have 

\begin{gather}
p\left(D,\theta\right) = \hat{\theta} \neq \theta = p\left(A,\theta\right) \\
p\left(D,\left\{A,D\right\}\right) = \left\{D\right\} = p\left(AD,\left\{A,D\right\}\right)
\end{gather}

If the defender knows of the attacker, this leads to a higher level hypergame, where 

\begin{equation}
p\left(DAD,\left\{A,D\right\}\right) = p\left(AD,\left\{A,D\right\}\right) = \left\{D\right\}
\end{equation}

The defender's optimization is

\begin{gather}
\min \limits_{u} J\left(u,\hat{\theta} - \Delta \theta\right) \\
g(u,c) \leq 0 
\end{gather}

\noindent The true $\theta$ values are unknown to the defender, but the defender calculates the $\Delta \theta$ values by solving what is believed to be the attacker's problem: (\ref{att start})-(\ref{def opt}).

\begin{gather}
\max \limits_{\Delta \theta} J\left(\hat{u}^*,\theta\right) \\
\frac{1}{2}\left\|\Delta \theta\right\|^2 \leq \delta_{\theta,max}\\
\hat{u}^* = \argmin_{\hat{u}} \left( J\left(\hat{u},\hat{\theta}\right) : g(\hat{u},c) \leq 0 \right)
\end{gather}

Given that the defender only knows $\hat{\theta}$, not $\theta$, solving the attacker's problem to determine $\Delta \theta$ will require using $\theta = \hat{\theta} - \Delta \theta$.  As a further extension, we consider the scenario where the attacker manipulates the defender's perceptions of $\theta$, the defender knows that the attacker is doing this, and the attacker knows that the defender is anticipating the attacker's perturbations.  We refer to this as a `double-bluff' manipulation here and in the rest of the paper.  This problem leads us to a multi-level optimization problem:

\begin{gather}
\max_{\Delta \theta} J\left(u^*,\theta\right) \\
\frac{1}{2} \left\| \Delta \theta\right\|^2 \leq \delta_{\theta,max} \\
u^* = \argmin_u \left(J\left(u,\tilde{\theta}\right) : g(u,c) \leq 0 \right)
\end{gather}

\noindent subject to

\begin{gather}
\max_{\Delta \hat{\theta}} J\left(\hat{u}^*,\tilde{\theta}\right) \\
\frac{1}{2}\left\|\Delta \hat{\theta}\right\|^2 \leq \delta_{\theta,max} \\
\hat{u}^* = \argmin_{\hat{u}} \left(J\left(\hat{u},\tilde{\theta} + \Delta \hat{\theta}\right): g(\hat{u},c) \leq 0\right)
\end{gather}

\noindent where $p\left(D,\theta\right) = \tilde{\theta} = \hat{\theta} - \Delta \hat{\theta}$ is the defender's estimate of the true value of $\theta$.  There are many other potential combinations of misperceptions that we could also consider.  Note that the defender's perceived cost (i.e., objective function value) may differ from the true cost in some cases.

For the purpose of comparison, we can model the attacker manipulating the true value of $\theta$:

\begin{gather}
\max_{\Delta \theta} \min_u J\left(u,\theta + \Delta \theta\right) \\
\left\| \Delta \theta\right\|^2 \leq \delta_{\theta,max} \\
g\left(u,c\right)\leq 0
\end{gather}

In this case, there are no misperceptions, and the situation is simply a zero-sum game, not a hypergame.

\subsection{Constraint Manipulation}
\label{Constraint Manipulation}

The previous section involved the attacker manipulating parameters in the objective function.  In this case, there is a significant difference between manipulating the true values and the defender's perceptions.  If the attacker is manipulating the constraints, however, then the distinction changes.  If the attacker alters the constraint to be more restrictive, then it does not matter whether the manipulation is of the real constraint or of the defender's perceptions -- both actions lead to the same result (assuming that the defender abides by the constraint), and the perceived cost is the true cost in both cases.  If the attacker alters the constraint to be less restrictive, the results are less clear.  If the attacker manipulates the defender perception, the control process may hit a physical limit and/or damage the system trying to reach an infeasible state.  This could be modelled by having some kind of large penalty function for violations of the true constraint.  Manipulating the true constraint in such a way as to relax it may be impossible if the constraint is a physical limitation of the system.  For this section, we specify that the attacker can manipulate the defender's perception of parameters in the constraint ($\hat{c} = c + \Delta c$ are the quantities that the defender perceives).  These perturbations are then subject to a cost function as with the objective function parameter perturbations.

\begin{gather}
\frac{1}{2} \left\| \Delta c\right\|^2 \leq \delta_{c,max}
\end{gather}

\subsubsection{Maximizing Cost}

If the attacker is manipulating the defender's perceptions to maximize defender cost, this results in a series of multi-level optimization problems, corresponding to second or higher level hypergames, analogous to those described in the previous section.  If the attacker is deceiving an unsuspecting defender, we have

\begin{gather}
\max \limits_{\Delta c}  J\left(u^*,\theta\right) \\
\frac{1}{2} \left\|\Delta c\right\|^2 \leq \delta_{c,max}\\
u^* =  \argmin_u \left( J\left(u,\theta\right) : g\left(u,\hat{c}\right) \leq 0 \right)
\end{gather}

If the defender is aware of the attack, we have

\begin{gather}
\min \limits_{u} J\left(u,\theta\right) \\
g\left(u,\hat{c} - \Delta c\right) \leq 0
\end{gather}

\noindent subject to

\begin{gather}
\max \limits_{\Delta c} J\left(\hat{u}^*,\theta\right) \\
\frac{1}{2}\left\|\Delta c\right\|^2 \leq \delta_{c,max}\\
\hat{u}^* = \argmin_{\hat{u}} \left(J\left(\hat{u},\theta\right): g\left(\hat{u},\hat{c}\right) \leq 0 \right)
\end{gather}

In a situation analogous to that described in the previous section, the defender only knows $\hat{c}$, not $c$, so solving the attacker's problem to determine $\Delta c$ will require using $c = \hat{c} - \Delta c$.  If the attacker is aware that the defender is anticipating an attack, the resulting problem is

\begin{gather}
\max_{\Delta c} J\left(u,\theta\right) \\
\frac{1}{2} \left\| \Delta c\right\|^2 \leq \delta_{c,max} \\
u^* = \argmin_u \left(J\left(u,\theta\right) : g(u,\tilde{c}) \leq 0 \right)
\end{gather}

\noindent subject to

\begin{gather}
\max_{\Delta \hat{c}} J\left(\hat{u}^*,\theta\right) \\
\frac{1}{2}\left\|\Delta \hat{c}\right\|^2 \leq \delta_{c,max} \\
\hat{u}^* = \argmin_{\hat{u}} \left(J\left(\hat{u},\theta \right): g(\hat{u},\tilde{c} + \Delta \hat{c}) \leq 0\right)
\end{gather}

\noindent where $p\left(D,c\right) = \tilde{c} = \hat{c} - \Delta c$ is the defender's estimate of the true value of $c$.

\subsubsection{Breaking the System}
The attacker could also try to cause the defender to deviate maximally from the operating envelope constraint in the interest of causing a catastrophic failure.  We refer to this as attempting to break the system.  If the attacker is deceiving an unsuspecting defender, we have

\begin{gather}
\max \limits_{\Delta c}  \gamma^T g\left(u^*,c\right) \\
\frac{1}{2} \left\|\Delta c\right\|^2 \leq \delta_{c,max}\\
u^* = \argmin_u \left(J\left(u,\theta\right) : g\left(u,\hat{c}\right) \leq 0 \right)
\end{gather}

\noindent where $\gamma^T g\left(u,c\right)$ indicates a weighted sum for a vector-valued $g$.  If the defender is aware of the attack, we have

\begin{gather}
\min \limits_{u} J\left(u,\theta\right) \\
g\left(u,\hat{c} - \Delta c\right) \leq 0
\end{gather}

\noindent subject to

\begin{gather}
\max \limits_{\Delta c} \gamma^T g\left(\hat{u}^*,c\right) \\
\frac{1}{2}\left\|\Delta c\right\|^2 \leq \delta_{c,max}\\
\hat{u}^* = \argmin_{\hat{u}} \left( J\left(\hat{u},\theta\right) : g\left(\hat{u},\hat{c}\right) \leq 0 \right)
\end{gather}

If the attacker is aware that the defender is anticipating an attack, the resulting problem is

\begin{gather}
\max_{\Delta c}  \gamma^T g\left(u^*,c\right) \\
\frac{1}{2} \left\| \Delta c\right\|^2 \leq \delta_{c,max} \\
u^* = \argmin_u \left(J\left(u,\theta\right) : g(u,\tilde{c}) \leq 0 \right)
\end{gather}

\noindent subject to

\begin{gather}
\max_{\Delta \hat{c}}  \gamma^T g\left(\hat{u}^*,c\right) \\
\frac{1}{2}\left\|\Delta \hat{c}\right\|^2 \leq \delta_{c,max} \\
\hat{u}^* = \argmin_{\hat{u}} \left(J\left(\hat{u},\theta \right): g(\hat{u},\tilde{c} + \Delta \hat{c}) \leq 0\right)
\end{gather}

\noindent where $\tilde{\theta}$ is defined as before.  There are various other possibilities in the same vein involving asymmetric information or false beliefs.

\subsection{Analytical Results}

\subsubsection{Objective Function Perturbations}

In this section, we will show that the defender can be robust with respect to manipulated perceptions of $\theta$.  Let us assume that $g\left(u,c\right)$ is convex for $c\geq0$ and 

\begin{gather}
J\left(u,\theta\right) = \sum_k \theta_k f_k\left(u\right)
\end{gather}

\noindent where each $f_k\left(u\right)$ is convex.  The optimization is therefore convex for $\theta \geq 0$, and the optimality conditions

\begin{gather}
\sum_k \frac{\partial f_k}{\partial u} \theta_k + \sum_l \lambda_l \frac{\partial g_l}{\partial u} = \theta^T \frac{\partial f}{\partial u} + \lambda^T \frac{\partial g}{\partial u} = 0 \\
0 \leq \lambda_l \perp g_l\left(u,c\right) \leq 0 \ \forall \ l
\end{gather}

\noindent are both necessary and sufficient; $\lambda$ is the vector of Kuhn-Tucker multipliers.  Let us also define

\begin{gather}
R\left(u\right) = \left\{ \left. \frac{\partial f_k}{\partial u}\right|_u : k = 1,2,\ldots, n_{\theta}\right\} \\
S\left(u\right) = \left\{ l : \lambda_l > 0\right\} \\
S'\left(u\right) = \left\{ l : g_l\left(u,c\right) = 0\right\} \\
T\left(u\right) = \left\{\left.\frac{\partial g_l}{\partial u}\right|_u : l \in S\left(u\right) \right\} \\
T'\left(u\right) = \left\{\left.\frac{\partial g_l}{\partial u}\right|_u : l \in S'\left(u\right) \right\}
\end{gather}

\noindent where $R$ and $T$ are sets of vectors, $S$ is a set of indices denoting the positive $\lambda_l$ values at $u$, and $S'$ is a set of indices denoting the active set at $u$.  Note that $S\left(u\right) \subseteq S'\left(u\right)$, and $S\left(u\right) \neq S'\left(u\right)$ only if there are active constraints with corresponding multipliers that are zero.

\begin{lemma}
\label{lem 1}
Assume that $u^* \in \argmin \limits_u \left(J\left(u,\theta\right): g\left(u,c\right) \leq 0 \right)$ and that $\hat{\theta} = \theta + \Delta \theta \geq 0$.  If there exists $\Delta \lambda \geq -\lambda$ such that

\begin{gather}
\Delta \theta^T \left.\frac{\partial f}{\partial u}\right|_{u^*} + \Delta \lambda^T \left.\frac{\partial g}{\partial u}\right|_{u^*} = 0 \\
\Delta \lambda_l g_l \left(u^*,c\right) = 0 \ \forall \ l
\end{gather}

\noindent then $u^* \in \argmin \limits_u \left(J\left(u,\hat{\theta}\right): g\left(u,c\right) \leq 0 \right)$ and $\hat{\lambda} = \lambda + \Delta \lambda$ are the new Kuhn-Tucker multipliers.
\end{lemma}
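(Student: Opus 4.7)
The plan is to verify that $(u^*, \hat\lambda)$ with $\hat\lambda := \lambda + \Delta\lambda$ satisfies the Kuhn--Tucker conditions for the perturbed problem $\min_u\{J(u,\hat\theta) : g(u,c) \leq 0\}$, and then invoke the convexity hypothesis to upgrade necessity to sufficiency. Since $\hat\theta \geq 0$ by assumption and each $f_k$ and each $g_l$ is convex, the perturbed optimization is convex, so its KKT conditions are both necessary and sufficient (this is exactly the statement the paper has already recorded just before the lemma).

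The verification proceeds by checking the four KKT conditions in turn. First, primal feasibility $g(u^*,c) \leq 0$ is inherited directly from the assumption that $u^*$ is optimal for the original problem with parameter $\theta$. Second, dual feasibility $\hat\lambda \geq 0$ follows immediately from the hypothesis $\Delta\lambda \geq -\lambda$, which rearranges to $\lambda + \Delta\lambda \geq 0$. Third, stationarity at $u^*$ is obtained by adding the original stationarity relation $\theta^T \frac{\partial f}{\partial u}\big|_{u^*} + \lambda^T \frac{\partial g}{\partial u}\big|_{u^*} = 0$ to the perturbation relation $\Delta\theta^T \frac{\partial f}{\partial u}\big|_{u^*} + \Delta\lambda^T \frac{\partial g}{\partial u}\big|_{u^*} = 0$ assumed in the statement; the sum is precisely $\hat\theta^T \frac{\partial f}{\partial u}\big|_{u^*} + \hat\lambda^T \frac{\partial g}{\partial u}\big|_{u^*} = 0$. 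Fourth, complementary slackness $\hat\lambda_l g_l(u^*,c) = 0$ follows by writing $\hat\lambda_l g_l(u^*,c) = \lambda_l g_l(u^*,c) + \Delta\lambda_l g_l(u^*,c)$, where the first term vanishes by the original complementarity and the second by the hypothesis $\Delta\lambda_l g_l(u^*,c) = 0$.

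Having checked all four conditions, sufficiency of KKT under convexity delivers the conclusion: $u^* \in \argmin_u\{J(u,\hat\theta) : g(u,c) \leq 0\}$, with $\hat\lambda$ as the associated multipliers. There is no substantive obstacle here; the lemma is essentially a linearity-of-KKT observation, and the hypotheses have been tailored to make each KKT condition decompose additively. The only subtlety worth flagging is the distinction between the active set $S'(u^*)$ and the support $S(u^*)$ of the positive multipliers: the complementarity hypothesis $\Delta\lambda_l g_l(u^*,c) = 0$ permits $\Delta\lambda_l \neq 0$ only when $l \in S'(u^*)$, but it does not constrain $\Delta\lambda_l$ to have the same support as $\lambda_l$, which is exactly the flexibility needed for the sign condition $\hat\lambda \geq 0$ to accommodate nontrivial reshufflings of the multipliers across degenerate active constraints.
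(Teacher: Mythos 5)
Your proposal is correct and follows essentially the same route as the paper's proof: add the two stationarity relations to get stationarity for $\hat{\theta}$ and $\hat{\lambda}$, check nonnegativity and complementarity of $\hat{\lambda}$ from the hypotheses, and invoke convexity to make the KKT conditions sufficient. Your version is slightly more explicit about enumerating all four KKT conditions, but there is no substantive difference.
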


\begin{proof}

If 

\begin{gather}
\theta^T \left.\frac{\partial f}{\partial u}\right|_{u^*} + \lambda^T \left.\frac{\partial g}{\partial u}\right|_{u^*} = 0 \\
\Delta \theta^T  \left.\frac{\partial f}{\partial u}\right|_{u^*} + \Delta \lambda^T \left.\frac{\partial g}{\partial u}\right|_{u^*} = 0
\end{gather}

\noindent then for $\hat{\theta} = \theta + \Delta \theta$

\begin{gather}
\left(\hat{\theta}^T - \Delta \theta^T \right) \left.\frac{\partial f}{\partial u}\right|_{u^*} + \left(\lambda^T - \Delta \lambda^T + \Delta \lambda^T \right) \left.\frac{\partial g}{\partial u}\right|_{u^*} = 0 \\
\hat{\theta}^T \left.\frac{\partial f}{\partial u}\right|_{u^*} + \left(\lambda^T + \Delta \lambda^T \right) \left.\frac{\partial g}{\partial u}\right|_{u^*} = 0
\end{gather}

Furthermore, since $\Delta \lambda \geq - \lambda$ and $\Delta \lambda_l g_l \left(u^*,c\right) = 0 \ \forall \ l$,

\begin{gather}
\hat{\theta}^T \left.\frac{\partial f}{\partial u}\right|_{u^*} + \hat{\lambda}^T \left.\frac{\partial g}{\partial u}\right|_{u^*} = 0 
\label{opt lem 1} \\
0 \leq \hat{\lambda}_l \perp g_l\left(u^*,c\right) \leq 0 \ \forall \ l
\label{opt lem 2}
\end{gather}

\noindent where $\hat{\lambda} = \lambda + \Delta \lambda$.  Since $\hat{\theta}\geq 0$ and $J\left(u,\hat{\theta}\right)$ and $g\left(u,c\right)$ are convex, the optimization

\begin{gather}
\min_u J\left(u,\hat{\theta}\right) \\
g\left(u,c\right) \leq 0
\end{gather}

\noindent is convex, and the optimality conditions (\ref{opt lem 1})-(\ref{opt lem 2}) are necessary and sufficient.  $u^*$ satisfies these conditions, so $u^* \in \argmin \limits_u \left(J\left(u,\hat{\theta}\right): g\left(u,c\right) \leq 0 \right)$.
\end{proof}

\begin{lemma}
\label{cor 2}
If $\text{span} \left(R\left(u^*\right)\right) \subseteq \text{span}\left(T\left(u^*\right)\right)$, then there exists $r>0$ such that for $\left\| \Delta \theta \right\|_p \leq r$, $p>0$, $u^* \in \argmin \limits_u \left(J\left(u,\theta\right): g\left(u,c\right) \leq 0 \right)$ implies that $u^* \in \argmin \limits_u \left(J\left(u,\hat{\theta}\right): g\left(u,c\right) \leq 0 \right)$, where $\hat{\theta} = \theta + \Delta \theta$.
\end{lemma}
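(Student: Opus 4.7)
The plan is to reduce the claim to Lemma~\ref{lem 1} by constructing a $\Delta\lambda \geq -\lambda$ that meets its hypotheses, and then to turn the span inclusion into a quantitative, linear bound so that $\|\Delta\theta\|_p$ small enough forces the sign condition.

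First I would use the hypothesis to produce $\Delta\lambda$. For any $\Delta\theta$, the vector $\Delta\theta^T \partial f/\partial u|_{u^*}$ lies in $\text{span}(R(u^*)) \subseteq \text{span}(T(u^*))$, so there exist scalars $\{\mu_l\}_{l \in S(u^*)}$ with $\sum_{l \in S(u^*)} \mu_l \, \partial g_l/\partial u|_{u^*} = -\Delta\theta^T \partial f/\partial u|_{u^*}$. Setting $\Delta\lambda_l := \mu_l$ for $l \in S(u^*)$ and $\Delta\lambda_l := 0$ otherwise gives the first Kuhn-Tucker perturbation equation of Lemma~\ref{lem 1}. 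Complementary slackness $\Delta\lambda_l g_l(u^*,c) = 0$ is automatic: since $S(u^*) \subseteq S'(u^*)$, the constraint $g_l(u^*,c)$ already vanishes for $l \in S(u^*)$, while $\Delta\lambda_l = 0$ elsewhere.

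The remaining requirement $\Delta\lambda \geq -\lambda$ is the quantitative part. For $l \notin S(u^*)$ it is trivial ($\Delta\lambda_l = 0$ and $\lambda_l \geq 0$). For $l \in S(u^*)$, by definition of $S$ we have $\lambda_l > 0$, hence $\lambda_{\min} := \min_{l \in S(u^*)} \lambda_l > 0$. To exploit this slack I would take $\mu$ to be the minimum-norm solution obtained from the Moore--Penrose pseudoinverse of the matrix whose columns are $\{\partial g_l/\partial u|_{u^*}\}_{l \in S(u^*)}$; this makes $\Delta\theta \mapsto \mu$ linear, giving a constant $C$ (depending only on $u^*$ and the gradients there) with $\|\mu\|_\infty \leq C \|\Delta\theta\|_2$. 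Choosing $r$ so that $Cr \leq \lambda_{\min}$, and invoking equivalence of finite-dimensional $p$-norms to swap between $\|\cdot\|_2$ and $\|\cdot\|_p$, forces $\Delta\lambda_l \geq -\lambda_{\min} \geq -\lambda_l$ on $S(u^*)$. If the non-negativity $\hat{\theta} \geq 0$ required by Lemma~\ref{lem 1} is not automatic, further shrinking $r$ keeps $\hat{\theta}$ inside any open region where $\theta$ is componentwise positive.

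The hard part is not existence, which the span inclusion hands us directly, but the one-sided control $\Delta\lambda \geq -\lambda$. The crucial observation is that strict positivity of $\lambda_l$ on $S(u^*)$ provides a positive margin into which a linearly controlled, minimum-norm $\Delta\lambda$ can fit for small $\Delta\theta$; without the strictness, an arbitrarily small perturbation could push some multiplier negative and the argument would break. With the bound in place, Lemma~\ref{lem 1} immediately yields $u^* \in \argmin_u (J(u,\hat{\theta}) : g(u,c) \leq 0)$, completing the proof.
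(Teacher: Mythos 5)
Your proposal is correct and follows essentially the same route as the paper's proof: construct $\Delta\lambda$ supported on $S(u^*)$ via the span inclusion, take the minimum-norm (Moore--Penrose) solution so that $\Delta\lambda$ depends linearly on $\Delta\theta$, and choose $r$ so that $\|\Delta\lambda\|_\infty \leq \lambda_{\min} := \min_{l \in S(u^*)} \lambda_l > 0$, which gives $\Delta\lambda \geq -\lambda$ and lets Lemma~\ref{lem 1} finish the argument. Your extra remark about shrinking $r$ to preserve $\hat{\theta} \geq 0$ is a small point the paper's own proof leaves implicit.
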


\begin{proof}
Let us define the matrix $A$ such that the rows of $A$ are the vectors $\frac{\partial g_l}{\partial u} \in T\left(u^*\right)$.  If $\text{span} \left(R\left(u^*\right)\right) \subseteq \text{span}\left(T\left(u^*\right)\right)$, then any linear combination of $\frac{\partial f_k}{\partial u} \in R\left(u^*\right)$ exists within $\text{span}\left(T\left(u^*\right)\right)$, which is the rowspace of $A$.  This implies that for any $\Delta \theta$, there exists $\Delta \lambda$ such that

\begin{gather}
\sum_k \Delta \theta_k \left.\frac{\partial f_k}{\partial u}\right|_{u^*} + \sum \limits_{l \in S\left(u^*\right)} \Delta \lambda_l \left. \frac{\partial g_l}{\partial u} \right|_{u^*} = \Delta \theta^T \left. \frac{\partial f}{\partial u} \right|_{u^*} + b^T A = 0 \\
\Delta \lambda_l = 0, \ l \notin S \left(u^*\right)
\end{gather}

\noindent and if $A^+$ is the Moore-Penrose pseudo-inverse of $A$, then

\begin{gather}
b^T = - \Delta \theta^T \left. \frac{\partial f}{\partial u} \right|_{u^*} A^+
\end{gather}

\noindent satisfies this exactly because $\Delta \theta^T \left. \frac{\partial f}{\partial u} \right|_{u^*}$ is in the rowspace of $A$.  Define

\begin{gather}
\lambda_{min} = \min \limits_{l \in S\left(u^*\right)} \lambda_l
\end{gather}

By definition, $\lambda_{min} > 0$.  If $\left\| \Delta \lambda \right\|_p \leq \lambda_{min}$, then

\begin{gather}
\max_l \left|\Delta \lambda_l\right| = \left\| \Delta \lambda\right\|_{\infty} \leq \left\| \Delta \lambda \right\|_p \leq \lambda_{min} , \ p >0
\end{gather}

Therefore, $\left\| \Delta \lambda \right\|_p \leq \lambda_{min}$ implies that $\max \limits_l \left|\Delta \lambda_l\right| \leq \lambda_{min} $ and thus $\Delta \lambda_l \geq - \lambda_{min} \geq - \lambda_l \ \forall \ l$.  If

\begin{gather}
\left\| \Delta \theta\right\|_p \leq \frac{\lambda_{min}}{\left\| \frac{\partial f}{\partial u} A^+\right\|_p} = r
\end{gather}

\noindent then

\begin{gather}
\left\| \Delta \lambda \right\|_p = \left\| \Delta \theta^T \frac{\partial f}{\partial u} A^+ \right\|_p \leq \left\| \Delta \theta \right\|_p \left\| \frac{\partial f}{\partial u} A^+ \right\| \leq \lambda_{min}
\end{gather}

By Lemma \ref{lem 1}, $u^* \in \argmin \limits_u \left(J\left(u,\hat{\theta}\right): g\left(u,c\right) \leq 0 \right)$.

\end{proof}

\begin{corollary}
\label{cor 1}
Define the matrix $A$ such that the rows of $A$ are the vectors $\frac{\partial g_l}{\partial u} \in T\left(u^*\right)$.  If $A$ is invertible, then there exists $r>0$ such that for $\left\| \Delta \theta \right\|_p \leq r$, $p>0$, $u^* \in \argmin \limits_u \left(J\left(u,\theta\right): g\left(u,c\right) \leq 0 \right)$ implies that $u^* \in \argmin \limits_u \left(J\left(u,\hat{\theta}\right): g\left(u,c\right) \leq 0 \right)$, where $\hat{\theta} = \theta + \Delta \theta$.
\end{corollary}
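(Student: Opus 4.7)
The plan is to derive Corollary \ref{cor 1} as an immediate specialization of Lemma \ref{cor 2}. The key observation is that invertibility of $A$ forces the inclusion $\text{span}\left(R\left(u^*\right)\right) \subseteq \text{span}\left(T\left(u^*\right)\right)$ to hold vacuously, after which the conclusion is just a restatement of Lemma \ref{cor 2} with the same value of $r$.

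To make this precise, I would first observe that the row space of $A$ coincides with $\text{span}\left(T\left(u^*\right)\right)$ by construction, since the rows of $A$ are exactly the gradients $\frac{\partial g_l}{\partial u}\big|_{u^*}$ for $l \in S\left(u^*\right)$. Invertibility of $A$ then requires $A$ to be a square $n_u \times n_u$ matrix of full rank, where $n_u$ is the dimension of $u$; consequently its row space is all of $\Re^{n_u}$. Since each element of $R\left(u^*\right)$ is itself a vector in $\Re^{n_u}$, the inclusion $\text{span}\left(R\left(u^*\right)\right) \subseteq \text{span}\left(T\left(u^*\right)\right) = \Re^{n_u}$ is automatic. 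Having verified the hypothesis of Lemma \ref{cor 2}, I would then simply invoke that lemma to obtain the required $r > 0$. No additional estimate is needed, because the construction of $\Delta \lambda$ in Lemma \ref{cor 2} via the Moore--Penrose pseudo-inverse $A^+$ reduces in the invertible case to $A^{-1}$, so the bound $r = \lambda_{min} / \left\|\frac{\partial f}{\partial u} A^{-1}\right\|_p$ is well defined.

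The main obstacle, such as it is, is purely bookkeeping: one should flag that invertibility implicitly constrains the active set so that exactly $n_u$ inequality constraints with strictly positive multipliers are in play at $u^*$, which is a noticeably stronger hypothesis than the span condition of Lemma \ref{cor 2}. This is also what makes the corollary genuinely weaker than the lemma, rather than equivalent to it. Beyond checking this matches the setup of Lemma \ref{cor 2}, the proof is a one-line reduction.
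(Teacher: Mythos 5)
Your proposal is correct and follows essentially the same route as the paper: invertibility of $A$ gives $\text{span}\left(T\left(u^*\right)\right) = \Re^{n_u}$, hence $\text{span}\left(R\left(u^*\right)\right) \subseteq \text{span}\left(T\left(u^*\right)\right)$ holds automatically, and Lemma \ref{cor 2} then delivers the conclusion. Your added observations about $A^+$ reducing to $A^{-1}$ and the implicit active-set restriction are accurate but not needed for the reduction.
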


\begin{proof}
If $A$ is invertible, then the rows of $A$ are linearly independent and $\text{span} \left(T \left(u^*\right)\right) = R^{n_u}$, where $u \in R^{n_u}$, and thus $\text{span} \left(R\left(u^*\right)\right) \subseteq \text{span}\left(T\left(u^*\right)\right)$.  This satisfies the conditions of Lemma \ref{cor 2}, and thus the same conclusions follow.

\end{proof}

\begin{lemma}
\label{lem 2}
The set $\Theta \left(u^*\right) = \left\{ \hat{\theta} : u^* \in \argmin \limits_u \left(J\left(u,\hat{\theta}\right): g\left(u,c\right) \leq 0 \right), \hat{\theta} \geq 0\right\}$ is unbounded and convex if it is non-empty.
\end{lemma}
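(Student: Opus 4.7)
The plan is to prove unboundedness and convexity separately, exploiting the fact that (by the standing assumption of this subsection) $J$ is linear in $\theta$, i.e.\ $J(u,\theta)=\sum_k \theta_k f_k(u)$, so that the KKT stationarity condition is linear in $(\theta,\lambda)$.

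For unboundedness, I would pick any $\hat{\theta}\in\Theta(u^*)$ (which exists by assumption) and show that $\alpha\hat{\theta}\in\Theta(u^*)$ for every $\alpha>0$. Since $J(u,\alpha\hat{\theta})=\alpha J(u,\hat{\theta})$, scaling the objective by a positive constant does not change the argmin over the feasible set, so $u^*$ remains a minimizer; and $\alpha\hat{\theta}\geq 0$ since $\hat{\theta}\geq 0$. Letting $\alpha\to\infty$ gives unboundedness. (Alternatively, one can verify this at the KKT level by taking $\hat{\lambda}\mapsto\alpha\hat{\lambda}$; this is equivalent.)

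For convexity, I would take $\hat{\theta}_1,\hat{\theta}_2\in\Theta(u^*)$ with corresponding KKT multipliers $\hat{\lambda}_1,\hat{\lambda}_2\geq 0$ (existing by the necessity of KKT for the convex program at $u^*$) and any $\beta\in[0,1]$. Set $\hat{\theta}_\beta=\beta\hat{\theta}_1+(1-\beta)\hat{\theta}_2$ and $\hat{\lambda}_\beta=\beta\hat{\lambda}_1+(1-\beta)\hat{\lambda}_2$. Taking the same convex combination of the two stationarity equations
\begin{gather*}
\hat{\theta}_i^T\left.\frac{\partial f}{\partial u}\right|_{u^*}+\hat{\lambda}_i^T\left.\frac{\partial g}{\partial u}\right|_{u^*}=0,\quad i=1,2,
\end{gather*}
yields the stationarity condition at $(\hat{\theta}_\beta,\hat{\lambda}_\beta)$. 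Non-negativity $\hat{\theta}_\beta\geq 0$ and $\hat{\lambda}_\beta\geq 0$ is immediate, primal feasibility $g(u^*,c)\leq 0$ is unchanged, and complementary slackness $\hat{\lambda}_{\beta,l}\,g_l(u^*,c)=0$ follows componentwise from the fact that both $\hat{\lambda}_{1,l}g_l(u^*,c)=0$ and $\hat{\lambda}_{2,l}g_l(u^*,c)=0$. Because $J(\cdot,\hat{\theta}_\beta)$ is convex (nonnegative combination of convex $f_k$'s) and the feasible set is convex, these KKT conditions are sufficient, so $u^*\in\argmin_u(J(u,\hat{\theta}_\beta): g(u,c)\leq 0)$, i.e.\ $\hat{\theta}_\beta\in\Theta(u^*)$.

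The main obstacle is essentially bookkeeping: one must recognize that the KKT multipliers, not just the $\hat{\theta}$'s, must be combined convexly, and that sufficiency of KKT (which depends on the convexity assumptions stated at the start of the subsection) is what closes the argument. Once the linearity of $J$ in $\theta$ is noted, both claims reduce to linear-algebraic manipulations of the optimality conditions, much in the spirit of Lemma~\ref{lem 1}.
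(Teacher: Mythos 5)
Your proof is correct and follows essentially the same route as the paper: unboundedness via positive scaling of $\hat{\theta}$ using linearity of $J$ in $\theta$, and convexity via the linearity of the KKT stationarity and complementarity conditions in the pair $\left(\hat{\theta},\hat{\lambda}\right)$ at the fixed point $u^*$, closed off by sufficiency of KKT for the convex program. If anything, your componentwise verification of complementary slackness for the combined multipliers is slightly more explicit than the paper's appeal to the convexity of the solution set of a system of linear inequalities, but the underlying argument is identical.
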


\begin{proof}
$J\left(u,\theta\right)$ is linear in $\theta$, so $J\left(u,c\hat{\theta}\right) = c J\left(u,\hat{\theta}\right)$ for any positive scalar $c$.  Optimal solutions are invariant with respect to scalar multiples of the objective function:

\begin{gather}
\argmin_u \left(J\left(u,\hat{\theta}\right): g\left(u,c\right) \leq 0 \right) = \argmin_u \left(c J\left(u,\hat{\theta}\right): g\left(u,c\right) \leq 0 \right) \nonumber \\
= \argmin_u \left(J\left(u,c\hat{\theta}\right): g\left(u,c\right) \leq 0 \right)
\end{gather}

Therefore, for any $\hat{\theta} \in \Theta \left(u^*\right)$ and any positive scalar $c$, $c \hat{\theta} \in \Theta \left(u^*\right)$.  Thus, $\Theta \left(u^*\right)$ is unbounded if it is non-empty.  Furthermore, for fixed $u^*$, the optimality conditions

\begin{gather}
\hat{\theta}^T \left.\frac{\partial f}{\partial u}\right|_{u^*} + \hat{\lambda}^T \left.\frac{\partial g}{\partial u}\right|_{u^*} = 0 \\
\hat{\lambda}_l = 0 \ l \notin S'\left(u^*\right) \\
\hat{\lambda}_l \geq 0 \ l \in S'\left(u^*\right)
\end{gather}

\noindent form a set of linear inequalities in $\hat{\lambda}$ and $\hat{\theta}$; because $u^*$ is fixed, we can disregard $g\left(u^*,c\right) \geq 0$.  The space of $\hat{\lambda}$ and $\hat{\theta}$ that satisfy these constraints is therefore convex.  Since this space is convex, for any $\left(\hat{\theta}_1,\hat{\lambda}_1\right)$ and $\left(\hat{\theta}_2,\hat{\lambda}_2\right)$ in this space

\begin{gather}
\left(\alpha\hat{\theta}_1 + \left(1-\alpha\right) \hat{\theta}_2,\alpha\hat{\lambda}_1 + \left(1-\alpha\right) \hat{\lambda}_2\right) , \ \alpha \in \left[0,1\right]
\end{gather}

\noindent remains in $\Theta \left(u^*\right)$.  Thus for any $\hat{\theta}_1, \hat{\theta_2} \in \Theta \left(u^*\right)$, $\left(\alpha\hat{\theta}_1 + \left(1-\alpha\right) \hat{\theta}_2\right) \in \Theta \left(u^*\right)$, so $\Theta \left(u^*\right)$ is convex.
\end{proof}

\begin{theorem}
If $\text{span} \left(R\left(u^*\right)\right) \subseteq \text{span}\left(T\left(u^*\right)\right)$ and $u^* \in \argmin_u \left(J\left(u,\theta\right): g\left(u,c\right) \leq 0 \right)$, there exists a convex, unbounded set of $\Delta \theta$ such that $u^* \in \argmin \limits_u \left(J\left(u,\theta + \Delta \theta \right): g\left(u,c\right) \leq 0 \right)$.
\end{theorem}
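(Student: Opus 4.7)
The plan is to combine Lemmas~\ref{cor 2} and~\ref{lem 2} through a translation argument. Let $\Theta(u^*)$ be the set defined in Lemma~\ref{lem 2}, and write $\Theta^*(u^*) = \Theta(u^*) - \theta = \{\Delta\theta : \theta + \Delta\theta \in \Theta(u^*)\}$ for the set of admissible perturbations that the theorem asks about.

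First I would establish non-emptiness. Since $u^* \in \argmin_u\left(J(u,\theta): g(u,c) \leq 0\right)$ is given, with $\theta \geq 0$ implicit from the convexity setup at the start of the section, we already have $\theta \in \Theta(u^*)$, i.e.\ $0 \in \Theta^*(u^*)$. The spanning hypothesis $\text{span}(R(u^*)) \subseteq \text{span}(T(u^*))$ then lets me invoke Lemma~\ref{cor 2} to obtain an $r>0$ such that the entire ball $\{\Delta\theta : \|\Delta\theta\|_p \leq r\}$ lies in $\Theta^*(u^*)$, so that $\Theta^*(u^*)$ has non-empty interior around the origin.

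Next I would apply Lemma~\ref{lem 2} to the non-empty set $\Theta(u^*)$ to conclude that it is convex and unbounded. Both properties are preserved under affine translation, so $\Theta^*(u^*)$ inherits them: a convex combination $\alpha\Delta\theta_1 + (1-\alpha)\Delta\theta_2$ of two points in $\Theta^*(u^*)$ corresponds via $\theta + \Delta\theta$ to a convex combination inside $\Theta(u^*)$, and the ray $\{c\theta : c>0\} \subseteq \Theta(u^*)$ produced by the scaling argument of Lemma~\ref{lem 2} translates to the unbounded half-line $\{(c-1)\theta : c>0\} \subseteq \Theta^*(u^*)$.

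The argument is essentially bookkeeping, so I do not anticipate a serious obstacle. The only subtlety is the implicit $\hat\theta \geq 0$ constraint baked into the definition of $\Theta(u^*)$: it may force me to shrink $r$ in the local step if $\theta$ has zero components, and it means $\Theta^*(u^*)$ is not all of $\mathbb{R}^{n_\theta}$. Neither issue affects the convex--unbounded conclusion, which is already guaranteed by Lemma~\ref{lem 2} once non-emptiness has been checked.
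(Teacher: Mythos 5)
Your proposal is correct and follows essentially the same route as the paper: invoke Lemma~\ref{cor 2} to establish that $\Theta\left(u^*\right)$ is non-empty, then apply Lemma~\ref{lem 2} to conclude it is convex and unbounded. Your explicit translation from $\Theta\left(u^*\right)$ to the set of $\Delta \theta$ is a small piece of bookkeeping the paper leaves implicit, but the substance is identical.
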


\begin{proof}
By Lemma \ref{cor 2}, if $\text{span} \left(R\left(u^*\right)\right) \subseteq \text{span}\left(T\left(u^*\right)\right)$ and $u^* \in \argmin_u \left(J\left(u,\theta\right): g\left(u,c\right) \leq 0 \right)$, then there exists $r>0$ such that for $\left\| \Delta \theta \right\|_p \leq r$, $p>0$, $u^* \in \argmin \limits_u \left(J\left(u,\theta + \Delta \theta\right): g\left(u,c\right) \leq 0 \right)$.  Therefore, the set 

\begin{gather}
\Theta \left(u^*\right) = \left\{ \hat{\theta} : u^* \in \argmin \limits_u \left(J\left(u,\hat{\theta}\right): g\left(u,c\right) \leq 0 \right), \hat{\theta} \geq 0\right\}
\end{gather}

\noindent is non-empty.  By Lemma \ref{lem 2} if $\Theta \left(u^*\right)$ is non-empty, it is unbounded and convex.
\end{proof}

\begin{lemma}
If $\text{span} \left(R\left(u^*\right)\right) \nsubseteq \text{span} \left(T'\left(u^*\right)\right)$ for $u^* \in \argmin \limits_u \left(J\left(u,\theta\right): g\left(u,c\right) \leq 0 \right)$, then for any $\epsilon > 0$, there exists $\Delta \theta$ such that $\left\| \Delta \theta\right\| < \epsilon$ and $u^* \notin \argmin \limits_u \left(J\left(u,\hat{\theta}\right): g\left(u,c\right) \leq 0 \right)$.
\end{lemma}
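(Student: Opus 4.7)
The plan is to run the contrapositive of the argument used in Lemma \ref{cor 2}: if the gradient-span containment fails, then some perturbation of $\theta$ produces a right-hand side in the stationarity condition that cannot be cancelled by any legitimate multiplier vector, so the KKT conditions -- which are necessary for optimality of the convex perturbed problem -- must be violated at $u^*$.

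First I would unpack the hypothesis. Since $\text{span}(R(u^*)) \nsubseteq \text{span}(T'(u^*))$, there exist coefficients $a_1,\dots,a_{n_\theta}$ such that the vector
\begin{gather}
v \;=\; \sum_k a_k \left.\frac{\partial f_k}{\partial u}\right|_{u^*} \;\notin\; \text{span}(T'(u^*)).
\end{gather}
Given $\epsilon>0$, I would then set $\Delta\theta = \delta\, a$ with $\delta>0$ chosen small enough that $\|\Delta\theta\|<\epsilon$ (and, since $\theta\ge 0$ with strict inequality wherever the construction forces $a_k<0$, small enough that $\hat\theta=\theta+\Delta\theta\ge 0$ so the perturbed problem remains convex). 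With this choice,
\begin{gather}
\Delta\theta^T \left.\frac{\partial f}{\partial u}\right|_{u^*} \;=\; \delta\, v \;\notin\; \text{span}(T'(u^*)).
\end{gather}

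Next I would argue by contradiction. Suppose $u^* \in \argmin_u\bigl(J(u,\hat\theta):g(u,c)\le 0\bigr)$. Since the feasible set is unchanged and the problem is convex, the KKT conditions are necessary and sufficient, so there exist multipliers $\hat\lambda\ge 0$ with $\hat\lambda_l=0$ for $l\notin S'(u^*)$ satisfying
\begin{gather}
\hat\theta^T \left.\frac{\partial f}{\partial u}\right|_{u^*} + \hat\lambda^T \left.\frac{\partial g}{\partial u}\right|_{u^*} \;=\; 0.
\end{gather}
Subtracting the original KKT stationarity condition at $(u^*,\lambda)$ and writing $\Delta\lambda=\hat\lambda-\lambda$, which is supported on $S'(u^*)$ because both $\lambda_l$ and $\hat\lambda_l$ vanish off the active set, gives
\begin{gather}
\delta\, v \;=\; \Delta\theta^T \left.\frac{\partial f}{\partial u}\right|_{u^*} \;=\; -\sum_{l\in S'(u^*)} \Delta\lambda_l \left.\frac{\partial g_l}{\partial u}\right|_{u^*} \;\in\; \text{span}(T'(u^*)),
\end{gather}
contradicting the choice of $v$. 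Hence $u^* \notin \argmin_u\bigl(J(u,\hat\theta):g(u,c)\le 0\bigr)$, and since $\epsilon$ was arbitrary, the claim follows.

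The main obstacle I anticipate is a technical one: guaranteeing $\hat\theta\ge 0$ so that the perturbed problem stays convex and KKT remains necessary. If the witness coefficients $a_k$ force a decrease in a component $\theta_k$ that is already zero, I would simply replace $a$ by $-a$ (the span condition is symmetric, and the same $v$ up to sign still lies outside $\text{span}(T'(u^*))$); if both signs are blocked, that can only happen when the corresponding $f_k$ gradients are irrelevant to the construction, in which case the offending $a_k$ can be set to zero without changing $v$. With this caveat dispatched, the remainder is linear-algebraic and mirrors the sufficiency direction in Lemma \ref{cor 2}.
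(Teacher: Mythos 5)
Your proof is correct and follows essentially the same route as the paper's: pick a perturbation direction whose induced combination of the $\frac{\partial f_k}{\partial u}$ lies outside $\text{span}\left(T'\left(u^*\right)\right)$, scale it below $\epsilon$, and conclude that the stationarity condition cannot be satisfied by any multipliers supported on the active set, so $u^*$ cannot remain optimal. Your extra care about keeping $\hat{\theta} \geq 0$ is a point the paper glosses over, and it can be dispatched cleanly by noting that $\text{span}\left(R\left(u^*\right)\right) \nsubseteq \text{span}\left(T'\left(u^*\right)\right)$ forces some single gradient $\frac{\partial f_k}{\partial u}$ to lie outside the span, so one may take $\Delta \theta = \delta e_k$ with $\delta > 0$.
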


\begin{proof}
Assume that for sufficiently small $\epsilon>0$, there is no $\Delta \theta$ such that $0 < \left\| \Delta \theta\right\| < \epsilon$ and $u^* \notin \argmin \limits_u \left(J\left(u,\hat{\theta}\right): g\left(u,c\right) \leq 0 \right)$.  Then for sufficiently small $\Delta \theta$, there exists $\Delta \lambda$ such that

\begin{gather}
\Delta \theta^T \left.\frac{\partial f}{\partial u}\right|_{u^*} + \Delta \lambda^T \left.\frac{\partial g}{\partial u}\right|_{u^*} = 0 \\
\Delta \lambda_l \geq - \lambda_l \ l \in S'\left(u^*\right) \\
\Delta \lambda_l = 0 \ l \notin S'\left(u^*\right)
\end{gather}

Since $u^*$ is fixed, the active set cannot change.  Let us define the matrix $A$ such that the rows of $A$ are the vectors $\frac{\partial g_l}{\partial u}\in T'\left(u^*\right)$ and define the vector $b$ such that the elements of $b$ are $\Delta \lambda_l, \ l \in S'\left(u^*\right)$.  Then

\begin{gather}
\Delta \lambda^T \left.\frac{\partial g}{\partial u}\right|_{u^*} = b^T A \\
\Delta \theta^T \left.\frac{\partial f}{\partial u}\right|_{u^*} + b^T A = 0
\end{gather}

If $\text{span} \left(R\left(u^*\right)\right) \nsubseteq \text{span} \left(T\left(u^*\right)\right), \ l \in S'\left(u^*\right)$, then there exists $\Delta \theta_0$ such that $\Delta \theta_0^T \frac{\partial f}{\partial u} \notin \text{span} \left(T\left(u^*\right)\right)$ and therefore

\begin{gather}
\Delta \theta_0^T \left.\frac{\partial f}{\partial u}\right|_{u^*} + b^T A \neq 0 \ \forall \ b
\end{gather}

Moreover, for any such $\Delta \theta_0$, there exists $\Delta \theta = c \Delta \theta_0$ such that for any $c > 0$

\begin{gather}
c \Delta \theta_0^T \left.\frac{\partial f}{\partial u}\right|_{u^*} + b^T A \neq 0 \ \forall \ b
\end{gather}

Since $\left\| c \Delta \theta_0\right\| = c \left\| \Delta \theta_0\right\|$, for any $\epsilon > 0$, there exists $\Delta \theta = \frac{\epsilon}{\left\| \Delta \theta_0\right\|} \Delta \theta_0$ such that

\begin{gather}
\Delta \theta^T \left.\frac{\partial f}{\partial u}\right|_{u^*} + b^T A \neq 0 \ \forall \ b
\end{gather}

Because the optimality conditions are necessary and sufficient, and because these conditions cannot be satisfied, $u^* \notin \argmin \limits_u \left(J\left(u,\theta + \Delta \theta \right): g\left(u,c\right) \leq 0 \right)$, and thus the lemma is proved by contradiction.
\end{proof}

If small $\Delta \theta$ values change the value of $u^*$ but not the active set, it is possible to calculate the $\frac{\partial u}{\partial \Delta \theta}$ for the optimal solution by differentiating the optimality conditions.  This provides us with a linear system that we can solve to calculate $\frac{\partial u}{\partial \Delta \theta}$, and $u^*\left(\Delta \theta\right)$ will be smooth and well-defined as long as the active set does not change.  We can therefore compare this kind of system with one that is impervious to these small changes.  For such a system, the measure of the `safe' range is conservative, but outside of it, continuous changes in $\hat{\theta}$ could result in discrete jumps in $u^*$ as the active set changes.  Furthermore, if $J\left(u,\theta\right)$ is nonlinear in $\theta$ but still convex for all $\theta \geq 0$, then it may possible to produce similar proofs for this case, but this would require further assumptions regarding the dependence of $J$ on $\theta$.

\subsubsection{Constraint Function Manipulations}

Unfortunately, manipulations of $c$ are not subject to the same kinds of robustness that manipulations of $\theta$ are.  This is essentially a consequence of the discussion at the beginning of Section \ref{Constraint Manipulation}: manipulating the defender's perception of the constraints produces the same change in the decision variables as changing the true constraints would as long as the defender abides by the perceived constraints.  For example, 

\begin{gather}
g_l \left(u,c\right) = 0, \ l \in S\left(u\right) \\
\sum_i \frac{\partial g_l}{\partial u_i} \frac{\partial u_i}{\partial c_j} + \frac{\partial g_l}{\partial c_j} = 0
\end{gather}

Therefore, if $\frac{\partial g_l}{\partial c_j} \neq 0$, then $\frac{\partial u_i}{\partial c_j} \neq 0$.

\subsection{Test Problem}

As a demonstration, we consider minimizing power consumption for a fan in an HVAC system.  A problem like this could form a component in a larger HVAC system, possibly as a subsystem subject to repeated optimization under changing parameter values.  The baseline defender optimization problem is

\begin{gather}
\min \limits_{m,p} \theta_1 m + \theta_2 m^2 + \theta_3 p \\
\frac{1}{2} \left[ \left(m-c_m\right)^2 + \left(p-c_p\right)^2 - c_r^2\right] \leq 0
\label{static cons}
\end{gather}

\noindent where $m$ is the mass flow rate, $p$ is the static pressure, the $\theta$ values are power consumption parameters for the fan, and $c_m$, $c_p$, and $c_r$ are parameters defining the operating envelope.  The attacker can introduce perturbations $\Delta \theta_i$ such that $\hat{\theta}_i = \theta_i + \Delta \theta_i$ and $\frac{1}{2} \left\| \Delta \theta\right\|^2_2 \leq \delta_{\theta,max}$
or perturbations $\Delta c_m,\Delta c_p,\Delta c_r$ such that $\hat{c}_m = c_m + \Delta c_m$, $\hat{c}_p = c_p + \Delta c_p$, $\hat{c}_r = c_r - \Delta c_r$, and $\frac{1}{2} \left\| \Delta c \right\|^2_2 \leq \delta_{c,max}$.  Note the negative sign in $\hat{c}_r$.  This deviates slightly from our convention above, but it also helps to simplify later calculations in some ways, and it does not ultimately change the results.  In our computations in the rest of the paper, we use $\theta_1 = \theta_2 = 1$, $\theta_3 = 2$, $c_m = c_p = 5$, and $c_r^2 = 10$.  The $\frac{1}{2}$ constant in (\ref{static cons}) does not change the mathematical properties of the optimization, but it, too, simplifies some of the calculations used later in this paper; see Appendix \ref{Fan Description} for these calculations.

\section{Dynamic Optimization}

\subsection{Model Formulation}

We now bring hypergames to bear on a Model Predictive Control (MPC) problem, where the control objective is to minimize a cost function subject to state dynamics constraints and operational constraints over a time horizon of length $\tau$:

\begin{gather}
\min_{u^t} \sum \limits_{t=1}^{\tau} J(u^t,x^t,\theta) 
\label{dyn generic start}\\
x^t = f(x^{t-1},u^t,\alpha^t,\beta)
\label{dynamics} \\
x^{\tau} - x^0 = 0 \\
g\left(x^t,u^t,\alpha^t,\beta\right)\leq 0 
\label{dyn generic end}
\end{gather}

\noindent where $u^t$ are the control decision variables, $x^t$ are the states of the system, $\alpha^t$ are the system disturbances, and $\beta$ are the model parameters. We assume that $\beta$ and $\alpha^t$ can be affected by adversarial perturbations.  The attacker can either perturb the defender's perception of parameters $\beta$ to maximize cost (`Static Attack') or perturb the defender's perception of $\alpha^t$ to maximize cost ('Dynamic Attack'). The perturbations denoted $\Delta \beta$, and $\Delta \alpha^t$ are bounded by constraints, normalized as appropriate if they have different orders of magnitude; such constraints are then with respect to \textit{relative} perturbations on those parameters.

\begin{gather}
\frac{\Delta \beta}{\beta} \equiv \left[ \frac{\Delta \beta_1}{\beta_1} \ \frac{\Delta \beta_1}{\beta_1} \ \ldots \right]^T \\
\frac{1}{2} \left\|\frac{\Delta \beta}{\beta}\right\|^2 \leq \delta_{\beta,max} \\
\frac{1}{2} \sum \limits_{t=1}^{\tau} \|\Delta \alpha^t\|_2^2 \leq \delta_{\alpha,max}
\end{gather}

The static attack problem is

\begin{gather}
\max_{\Delta \beta} \sum \limits_{t=1}^{\tau} J(u^t,x^t,\theta) \\
\frac{1}{2} \left\|\frac{\Delta \beta}{\beta}\right\|^2 \leq \delta_{\beta,max} \\
x^t = f(x^{t-1},u^t,\alpha^t,\beta) \\
\hat{x}^0 = x^0
\end{gather}

\noindent subject to

\begin{gather}
\min_{u^t} \sum \limits_{t=1}^{\tau} J(u^t,\hat{x}^t,\theta) \\
\hat{x}^t = f(\hat{x}^{t-1},u^t,\alpha^t,\hat{\beta}) \\
\hat{x}^{\tau} - \hat{x}^0 = 0 \\
g\left(\hat{x}^t,u^t,\alpha^t,\hat{\beta}\right)\leq 0 
\end{gather}

This is a second level hypergame where $p\left(D,\beta\right) = \hat{\beta} \neq \beta$.  The defender optimization is with respect to perceived values, not real values; the attacker perturbations mean that $p\left(D,x^t\right) = \hat{x}^t \neq x^t$ even though the attacker does not directly manipulate the state variables.  The dynamic attack problem is

\begin{gather}
\max_{\Delta \alpha^t} \sum \limits_{t=1}^{\tau} J(u^t,x^t,\theta) \\
\frac{1}{2} \sum_t \left\|\Delta \alpha^t\right\|_2^2 \leq \delta_{\alpha,max} \\
x^t = f(x^{t-1},u^t,\alpha^t,\beta) \\
\hat{x}^0 = x^0
\end{gather}

\noindent subject to

\begin{gather}
\min_{u^t} \sum \limits_{t=1}^{\tau} J(u^t,\hat{x}^t,\theta) \\
\hat{x}^t = f(\hat{x}^{t-1},u^t,\hat{\alpha}^t,\beta) \\
\hat{x}^{\tau} - \hat{x}^0 = 0 \\
g\left(\hat{x}^t,u^t,\hat{\alpha}^t,\beta\right)\leq 0 
\end{gather}

This, similarly, is a second level hypergame where $p\left(D,\alpha^t\right) = \hat{\alpha}^t \neq \alpha^t$.  As before, we could consider many variations on the dynamic and static attacks, but we will only look at these two scenarios here.

\subsection{Analytical Results}

The analytical results derived for the static optimization problem are applicable here as well.  If the dynamic optimization is convex, there are analogous results for perturbations to $\theta$, and it can similarly be shown that constraint perturbations (to $\beta$ and $\alpha^t$, in this case) cannot exhibit the same kind of local robustness as objective function perturbations.

\subsection{Test Problem}

Our MPC test problem is a single-zone HVAC system with a fan, heater, and chiller.  The objective is to minimize power consumption subject to physical constraints (e.g., the zonal temperature evolution) and operational constraints (e.g., remaining within comfort-defined temperature limits).  The baseline optimal control problem for the system is

\begin{gather}
\min \sum \limits_{t=1}^{\tau} \left[\theta_1 m^t + \theta_2 \left(m^t\right)^2 + \nu_h c_p m^t \left(T^t_i - d^t T^t_0 - \left(1-d^t\right) T^t_n\right) \right. \nonumber \\
\left. + \nu_n c_p m^t \left(T^t_{s,n} - T^t_s\right) + \nu_c c_p m^t \left(T^t_i - T^t_s\right)\right] \\
T^t_n = \left(1-\gamma\right) T^{t-1}_n + \beta m^t \left(T^t_{s,n} - T^t_n \right) + \gamma T^t_0 + Q^t_n \\
T^{\tau}_n - T^0_n = 0 \\
m_l \leq m^t \leq m_u \\
T^t_{s,n} - T^t_s \geq 0 \\
T^l_n \leq T^t_n \leq T^u_n \\
d_l \leq d^t \leq d_u \\
T^l_{s,n} \leq T^t_{s,n} \leq T^u_{s,n} \\
T^t_i - d^t T^t_0 - \left(1-d^t\right) T^t_n \geq 0 \\
T^t_i - T^t_s \geq 0 
\end{gather}

\noindent where $m^t$ is the mass flow rate, $T^t_i$ is the internal duct temperature, $T^t_s$ is the temperature of the air put out by the chiller, $T^t_{s,n}$ is the temperature of the air supplied to the zone, $T^t_n$ is the temperature of the zone, and $d^t$ is the damper position.  All of these are control variables.  $T^t_0$ is the external temperature (set to $25^{\circ}$C in this instantiation of the model); $\beta$ and $\gamma$ are scalar parameters that capture the room thermal properties.  Other quantities listed in the problem description are parameters that are not affected by any adversarial perturbations.  See Appendix \ref{HVAC Description} for more details.  The fan, heater, and chiller power consumption levels at each time step are

\begin{gather}
\theta_1 m^t + \theta_2 \left(m^t\right)^2 \\
\nu_h c_p m^t \left(T^t_i - d^t T^t_0 - \left(1-d^t\right) T^t_n\right) \nu_n c_p m^t \left(T^t_{s,n} - T^t_s\right) \\
\nu_c c_p m^t \left(T^t_i - T^t_s\right)
\end{gather}

\noindent respectively.  In this model, the static pressure is almost constant, and thus we omit it from the fan component of the model.  The static attack manipulates the defender perception of $\beta$ and $\gamma$.  The attacker goal is to maximize power consumption over the entire time horizon given that the defender observes $\hat{\beta} = \beta + \Delta \beta$ and $\hat{\gamma} = \gamma + \Delta \gamma$ and the attacker is constrained by

\begin{gather}
\frac{1}{2} \left[\left(\frac{\Delta \beta}{\beta} \right)^2 + \left(\frac{\Delta \gamma}{\gamma}\right)^2 \right] \leq \delta_{max}
\end{gather}

\noindent subject to the defender optimization of the original baseline problem.    Because $\beta$ and $\gamma$ are of different magnitudes, using relative perturbations, not absolute ones, avoids some potential problems.  We also highlight the previously mentioned differences between the perceived and actual state variables values.  For example, the true zone temperature, $T^t_n$, and the defender perception of the zone temperature, $\hat{T}^t_n$, will evolve according to the equations, respectively,

\begin{gather}
T^t_n = \left(1-\gamma\right) T^{t-1}_n + \beta m^t \left(T^t_{s,n} - T^t_n \right) + \gamma T^t_0 + Q^t_n \\
\hat{T}^t_n = \left(1-\hat{\gamma}\right) \hat{T}^{t-1}_n + \hat{\beta} m^t \left(T^t_{s,n} - \hat{T}^t_n \right) + \hat{\gamma} T^t_0 + Q^t_n 
\label{perceived thermal eqn}
\end{gather}

There will be a similar discrepancy between $T^t_i$ and $\hat{T}^t_i$.  The dynamic attack manipulates the defender's perception of $T^t_0$ so that $\hat{T}^t_0 = T^t_0 + \Delta T^t_0$ and $\frac{1}{2} \sum \limits_t \left(\Delta T^t_0\right)^2 \leq \Delta T_{max}$.  As in the static parameter manipulation case, the defender will misperceive both $T^t_n$ and $T^t_i$.  The full formulations for the static and dynamic manipulation problems are provided in Appendix \ref{HVAC Description}.

\section{Computational Implementation}

The specific calculations to turn each hypergame problem into a tractable nonlinear program (NLP) are provided in Appendices \ref{Fan Description} and \ref{HVAC Description}.  We summarize our general approach here.  Each hypergame produces a multi-level optimization problem.  To solve this, we write the optimality conditions of the lower level problems as complementarity conditions.  In the case of the fan optimization, we can transform these complementarity conditions into equality constraints and then solve the resulting problem as an NLP.  For the HVAC problem, we cannot do this, and this leaves us with a Mathematical Program with Equilibrium Constraints (MPEC) \cite{ruiz14jsr}.  We can solve the MPEC as a series of NLPs by relaxing the complementarity constraints and penalizing the relaxation with a progressively increasing weight.  For the work described in this paper, this was both reliable and efficient.  To implement our approach, we derived the necessary optimality conditions by hand, coded up the NLPs in MATLAB \cite{MATLAB}, and solved the NLPs using \textit{fmincon}.

\section{Results}
\subsection{Fan Optimization} 

\begin{table}[htp]
\caption{Objective Function Manipulation Results ($\delta_{\theta,max} = 0.1$)}
\label{tab:objfun results}
\centering
\setlength{\tabcolsep}{0.5em}
\begin{tabular}{c|cccccc}
Case												&$m$		&$p$		&$\Delta \theta_1$		&$\Delta \theta_2$		&$\Delta \theta_3$		&Power \\
\hline
Baseline										&2.06		&3.85		&-										&-										&-										&13.97 \\
True Manipulation						&2.02		&3.94		&0.150								&0.303								&0.292								&16.68 \\
Perception Manipulation			&2.29		&3.38		&-0.090								&-0.411								&0.151								&14.26 (12.42) \\
Faulty Defender Anticipation&1.95		&4.16		&-										&-										&-										&14.08 (14.71) \\
Double-Bluff Manipulation		&1.89		&4.42		&0.00684							&0.259								&-0.358								&14.30 (13.76)
\end{tabular}
\end{table}

Table \ref{tab:objfun results} shows the results for the attacker manipulation of the objective function parameters; power consumption values in parentheses indicate the power usage perceived by the defender where it differs from the actual usage.  Manipulating the true $\theta_i$ values produced a notable increase in power consumption compared with the baseline.  Manipulating defender perceptions, though, proved less effective.  For example, when the attacker manipulated the perceptions of an unsuspecting defender (Perception Manipulation), the gap between the perceived and actual power usage was noticeable, but the actual increase in power relative to the baseline case was small.  Similarly, if the defender erroneously thought that the attacker was manipulating the perceived values of $\theta_i$ (Faulty Defender Anticipation), the true power usage was almost identical to the baseline case, though the perceived power consumption was somewhat higher. 

When manipulating the defender's perceptions, the attacker got the defender to increase $m$ and decrease $p$ (relative to the baseline case) by decreasing the perceived value of $\theta_1$ and $\theta_2$ ($\Delta \theta_1 < 0$, $\Delta \theta_2 < 0$) and increasing the perceived value of $\theta_3$ ($\Delta \theta_3 > 0$).  This approach is more beneficial for the attacker than decreasing $m$ and increasing $p$ because the objective is quadratic in $m$ but only linear in $p$.  In the double-bluff situation, however, the defender expects the attacker to employ this optimal strategy, and so the attacker does the exact opposite (i.e., encourages the defender to increase $p$ and decrease $m$), which provides a slight additional benefit over the simple manipulation case.

\begin{figure}[htp]
\centering
\includegraphics[width=0.7\textwidth]{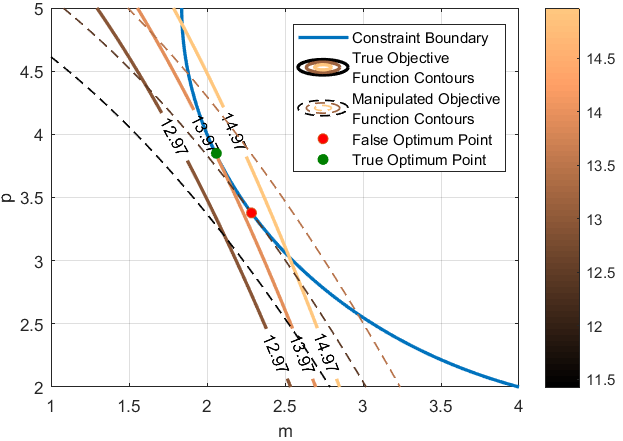}
\caption{Visualization of `Perception Manipulation' attack.}
\label{fig:manip plot}
\end{figure}

Fig. \ref{fig:manip plot} shows the `Perception Manipulation' case and why it produces so little payoff for the attacker.  There, we see how the perceived objective function contours are essentially a rotated version of the original objective function contours.  That rotation, produced by changes in the relative magnitudes of the $\theta_i$ parameters, produces a perceived (i.e., false) optimum point that is noticeably different from the true optimum point.  However, even a significant difference in the solution location does not necessarily translate to a large difference in the true objective function value because neither the constraint nor the objective function contours have large curvatures near the true optimum -- most of the translation between the two points is parallel to the contours of the true objective function.

Manipulating constraints gave the attacker more options than manipulating the objective function parameters.  As Table \ref{tab:cons results I} shows, constraint manipulation was also much more effective as an attacker strategy.  For example, when the attacker attempted to maximize power consumption against a defender who did not believe an attack was underway (Power Max, Normal), the attacker was able to increase power consumption by almost 30\% compared with the baseline.  Attempting to maximize the constraint violation (Break System, Normal) resulted in a significant level of violation, too.

\begin{table}[htp]
\caption{Constraint Manipulation Results ($\delta_{c,max} = 0.1$)}
\label{tab:cons results I}
\centering
\setlength{\tabcolsep}{0.5em}
\begin{tabular}{cc|cccc}
Attacker Action	&Defender Belief 		&$m$		&$p$		&Power 				&Violation\\
\hline
No Attack		&Normal									&2.06		&3.85		&13.97 				&-\\
Power Max		&Normal									&2.59		&4.22		&17.76 				&-\\
No Attack		&Power Max							&1.57		&3.37		&10.79 				&4.92\\
No Attack		&Break System						&2.59		&2.24		&17.76 				&-\\
Break System&Power Max							&1.17		&2.78		&8.11 				&4.85\\
Power Max		&Break System						&3.16		&4.53		&22.21 				&-\\
Break System&Normal									&1.58		&3.36		&10.79 				&2.20\\
Power Max (Double-Bluff)&Power Max 	&2.16		&3.94		&14.71 				&0.406\\
Break System (Double-Bluff)&Break System	&2.05		&3.87		&13.97				&0.003
\end{tabular}
\end{table}

\begin{table}[htp]
\caption{Constraint Manipulation Results ($\delta_{c,max} = 0.1$)}
\label{tab:cons results II}
\centering
\setlength{\tabcolsep}{0.5em}
\begin{tabular}{cc|ccc}
Attacker Action	&Defender Belief 		&$\Delta c_m$		&$\Delta c_p$		&$\Delta c_r$	\\
\hline
Power Max		&Normal									&0.301					&0.097					&0.316				\\
Break System&Power Max							&-0.285					&-0.137					&-0.316				\\
Power Max		&Break System						&0.301					&0.097					&0.316				\\
Break System&Normal									&-0.285					&-0.137					&-0.316				\\
Power Max (Double-Bluff)&Power Max 	&0.419					&0.157					&0.000				\\
Break System (Double-Bluff)&Break System	&-0.295		&-0.113					&-0.316				
\end{tabular}
\end{table}

In this case, there were also major consequences for wrongly anticipating an attack.  Anticipating a power maximization attack when there was no attack resulted in a worse constraint violation than when the attacker was deliberately trying to break the system.  Conversely, anticipating a `break system' attack when the actual attack was a `power max' attack led to an increase in power consumption of almost 60\% compared with the baseline.  Note that in these false anticipations, the attacker is assuming that the defender is just playing normally (i.e., the attacker is not taking advantage of the defender's mistake).  The double-bluff strategies did not provide much benefit to the attacker, though.

Table \ref{tab:cons results II} also shows the perturbations used by the attacker.  We can see that the attacker strategies for maximizing power consumption and breaking the system are almost exactly mirror opposites, which makes sense.  The double-bluff strategies are not that much different than the regular strategies that they correspond to, though, so it is not surprising that the double-bluff approach is not very effective.  Switching attack modes would be a better option if the defender is anticipating an attack, and though we did not calculate this here, it would be possible to calculate an optimal attack for one mode given that the defender is expecting the other mode.  Given how the two modes produce almost exactly opposite attacker strategies, the attacker strategy would likely be quite similar to the same attack mode employed against an unsuspecting defender. 

In general, changes in constraint parameters may result in larger objective function changes than changes in objective function parameters for two reasons.  Firstly, the changes in constraints will be multiplied by the dual variables (Lagrange or Kuhn-Tucker) associated with those constraints to produce a final change in the objective function.  Secondly, changing constraint values may result in the active set at the optimum also changing, and that could produce large, nonlinear changes in the objective function.  All in all, this likely makes constraint manipulation a much more attractive target for a would-be attacker than objective function manipulation.

\subsection{Single-Zone HVAC Control}

\begin{figure}[ht]
\centering
\includegraphics[width=0.6\textwidth]{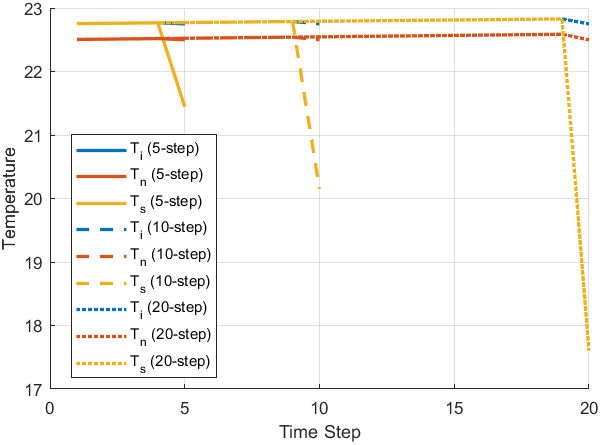}
\caption{Baseline temperature results.}
\label{fig:baseline}
\end{figure}

In the baseline case, and for all of the adversarial perturbations, $m^t$ and $d^t$ were both at their lower bounds for the entire optimization.  Fig. \ref{fig:baseline} shows the defender strategy in more detail for different optimization horizon lengths.  There, we see that the defender essentially allows the zone to evolve without manipulation until the last time step.  Because $T^t_0 > T^t_n$, this means that the zone warms over time, but because $\gamma$ is very small, this happens slowly.  At the last time step, the defender then chills the zone back to the initial temperature.  We can see this in the sudden drop in $T^t_s$ at the end of each time horizon; note that the optimization produces $T^t_s = T^t_{s,n}$ for each optimization.  This general behaviour is seen when the attacker manipulates defender perceptions, too.  The longer the optimization time horizon, the larger the drop in $T^t_s$ at the last time step.  If the length of the time horizon were increased sufficiently, eventually the system would require multiple steps of cooling, because $T^t_s$ would hit its lower bound.  $T^t_n$ never hit its upper bound, but if it did, this would also require additional cooling prior to the end of the optimization horizon.   

\begin{table}[ht]
\caption{Static Parameter Manipulation Results ($\delta_{max} = 0.1$)}
\label{tab:static parameter}
\centering
\setlength{\tabcolsep}{0.5em}
\begin{tabular}{c|ccc}
																&5-step								&10-step							&20-step \\
\hline
Baseline Power									&14.76								&29.48								&58.77 \\
Actual Power										&15.08								&30.27								&60.95 \\
Defender Perceived Power				&15.00								&29.97								&59.80	\\
$\Delta \beta$									&-1.81e-3							&-1.84e-3							&-1.94e-3 \\
$\Delta \gamma$									&1.64e-5							&1.52e-5							&9.74e-6 \\
$\lambda_{mean}$								&367									&370									&383
\end{tabular}
\end{table}

Table \ref{tab:static parameter} shows that manipulating the defender's perception of $\beta$ and $\gamma$ resulted in small power increases, relative to the baseline, and small discrepancies between the actual and perceived power use.  The perturbations themselves also change slightly as the length of the time horizon changes; there is a greater emphasis on $\Delta \beta$ as the time horizon gets longer.  In this model, $\beta$ essentially measures how hard it is to change the zone temperature with the HVAC system.  Setting $\Delta \beta < 0$ makes the defender think that the zone is harder to adjust than it actually is.  The $\gamma$ parameter then captures the heat transfer between the zone and the outside environment.  Setting $\Delta \gamma > 0$ makes the defender think that there is more heat transfer than there actually is.  All of this combines to increase the amount of cooling that the defender thinks is necessary at the end.  The $\Delta T$ plots in Figs. \ref{fig:percep static} and \ref{fig:percep dynamic} show this kind of behaviour: the defender thinks that the temperatures are higher than they actually are and therefore overcompensates at the end.  This overcompensation leads to an increase in power use and a final $T^t_n$ value  that is actually slightly lower than it should be.

Next, we can look at the $\lambda_{mean}$ values given in Table \ref{tab:static parameter}.  $\lambda_{mean}$ is the average of the Lagrange multipliers associated with (\ref{perceived thermal eqn}) and therefore provides a measure of how the $\Delta \beta$ and $\Delta \gamma$ perturbations get multiplied.  This value increases as the time horizon lengthens, which makes sense: as the time horizon lengthens, the importance of the thermal evolution process increases.  An attacker perturbing $\beta$ and $\gamma$ would want this value to be as large (positive or negative) as possible.

\begin{figure}[htp]
\centering
\begin{subfigure}[t]{0.6\textwidth}
\centering
\includegraphics[width=\textwidth]{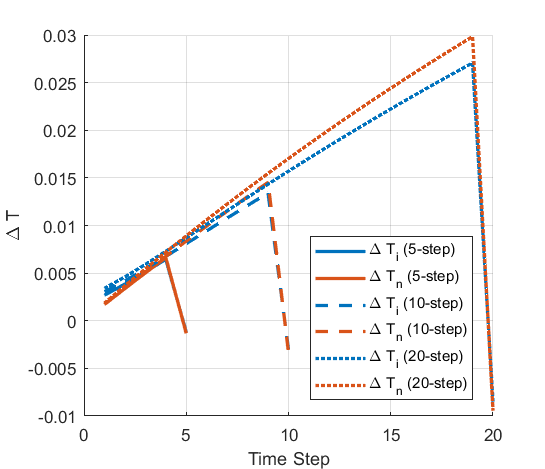}
\caption{Static attack.}
\label{fig:percep static}
\end{subfigure} \\
~\begin{subfigure}[t]{0.6\textwidth}
\centering
\includegraphics[width=\textwidth]{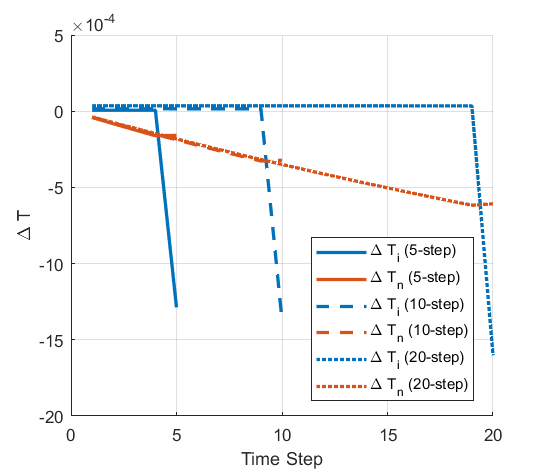}
\caption{Dynamic attack.}
\label{fig:percep dynamic}
\end{subfigure}
\caption{Temperature deviations, $\Delta T = \left(T_{true} - T_{perceived}\right)$.}
\end{figure}

\begin{table}[ht]
\caption{Dynamic Attack Results ($\Delta T_{max} = 0.1n$ for $n$-step problem)}
\label{tab:dynamic parameter}
\centering
\setlength{\tabcolsep}{0.5em}
\begin{tabular}{c|ccc}
																&5-step								&10-step							&20-step \\
\hline
Baseline Power									&14.76								&29.48								&58.77 \\
Actual Power										&16.35								&32.85								&65.68 \\
Defender Perceived Power				&15.58								&31.20								&62.27	\\
$\lambda_{mean}$								&219									&218									&216
\end{tabular}
\end{table}

\begin{figure}[htp]
\centering
\includegraphics[width=0.6\textwidth]{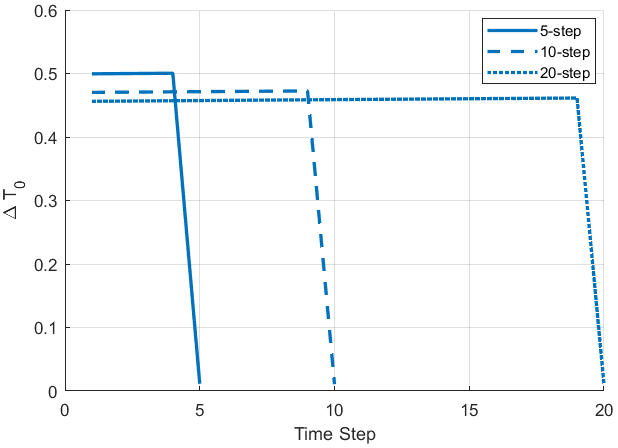}
\caption{Dynamic parameter manipulation temperature perturbations.}
\label{fig:percep2 perturb}
\end{figure}

\begin{table}[ht]
\caption{Power Consumption Comparisons relative to Baseline (\%)}
\label{tab:comparisons}
\centering
\setlength{\tabcolsep}{0.5em}
\begin{tabular}{c|ccc}
														&5-step							&10-step						&20-step \\
\hline
Static Attack (Perceived)		&1.6								&1.7								&1.8 \\
Static Attack (Actual)			&2.2								&2.7								&3.7	\\
Dynamic Attack (Perceived)	&5.6								&5.8								&6.0	\\
Dynamic Attack (Actual)			&10.1								&11.4								&11.8
\end{tabular}
\end{table}

Table \ref{tab:dynamic parameter} shows that manipulating $T^t_0$ provided a much larger increase in power consumption as well as a larger difference between the perceived and actual power consumption.  $\lambda_{mean}$ is also much smaller, and these phenomena are related.  The static parameters could only affect the power consumption indirectly through the temperature evolution equation.  $T^t_0$, however, shows up in the objective function and another constraint in addition to the temperature evolution equation, so increasing $\lambda_{mean}$ becomes less important.  In this case, misperceptions of $T^t_i$ and $T^t_n$ become smaller (see Figs. \ref{fig:percep static} and \ref{fig:percep dynamic}) and less important to the attacker.  Instead, the attacker uses $\Delta T^t_0 > 0$ to get the defender to increase $T^t_i$, and thus the defender ends up engaging the heater (because $T^t_i - d^t T^t_0 - \left(1-d^t\right) T^t_n > 0$ even though $\hat{T}^t_i - d^t \hat{T}^t_0 - \left(1-d^t\right) \hat{T}^t_n = 0$) as well as the chiller.  The perturbations themselves follow a clear pattern, as shown in Fig. \ref{fig:percep2 perturb}.  They increase very slightly over time until the last time step, at which point they drop to nearly zero.  The last step is less valuable to the attacker because there are no more thermal evolution steps left in the optimization at that point.  Table \ref{tab:comparisons} provides an overall summary of the power consumption results.  Generally speaking, the relative payoff for the attacker increases with the length of the time horizon.  The actual power consumed in the static attack scenario, relative to the baseline, is roughly proportional to the length of the time horizon, but the other three cases in Table \ref{tab:comparisons} all seem to plateau.

\section{Discussion}

\subsection{Stuxnet-like Attacks and Hypergames}

In this paper, we showed examples of how an attacker with knowledge of the system in question could manipulate the optimization processes of that system.  These problems were relatively small, but they were sufficient to show how the analysis works.  Hypergames are about strategic interactions when there are misperceptions and/or information asymmetries.  In this case, we were able to show how those asymmetries or misperceptions could affect system performance.  For example, getting the defender to respond to a non-existent threat could actually prove to be a very effective attacker strategy.  Conversely, it is possible for the defender system to have a natural robustness to perturbations (though that was not the case in these test problems).  We could consider more complex interactions, and we intend to do so in future work, but that future work will need to build upon the basics outlined here.

When we look at Stuxnet as a motivating example for this work, we can see that there are many similarities as well as some key differences between Stuxnet and the cases considered here.  In both Stuxnet and our case studies, the attacker employed limited deviations to avoid detection; we modelled this using the concept of an attacker budget.  Both also involved fake sensor signals ($\Delta T^t_0$) and manipulated calibration values ($\Delta \theta$, $\Delta c$, $\Delta \beta$, $\Delta \gamma$).  Our examples each had two different kinds of attack modes, and for the fan optimization, there were two different attack objectives for one of the modes, but these all involved negatively impacting the defender's control system in some way.  Finally, Stuxnet and the attacks considered in this paper all utilized deep knowledge of an automated decision-making system to determine how to perform the attack.

There are two primary sets of differences between this paper's case studies and Stuxnet.  Firstly, to the best of our knowledge, Stuxnet was not optimization-based, and the centrifuge control systems did not employ optimal control.  As such, the decision-making processes for both the attacker and the defender were different than in our paper.  Secondly, Stuxnet actually overrode the control signals and software to manipulate the centrifuges \cite{symantec11tr}, whereas our attacks only altered sensor and calibration data.  If we were trying to model the Stuxnet attack itself, these discrepancies would be problematic.  Given the more general nature of our investigation here, though, this is less of an issue.  Moreover, the key similarities identified above are ones we believe to be relevant to a wide range of control systems that might be threatened by cyber attacks in general and APTs in particular.

\subsection{Scalability Considerations}

A big question in applying these techniques to real-world problems is scalability.  These problems were relatively small; even the 20-step HVAC problem had only 120 variables (six per time step) in the baseline problem.  How easy would it be to propagate the optimality conditions and solve the resulting MPECs for larger systems?  The answer has two parts.  Firstly, there is the question of the optimality conditions.  If those optimality conditions are necessary but not sufficient, as in general continuous NLP problems, propagating the optimality conditions to turn the multi-level optimization into an MPEC may run into difficulties; multiple optima would be one example of this.  That being said, the single-zone HVAC system presented here was a nonconvex problem, and it had no such problems.  If there are more than two levels to the optimization, that can also cause difficulties, as the optimization conditions from lower levels compound.  This then leads into the question of tractability.  Adding the dual variables of lower level optimizations to the problem description in order to solve the system as an MPEC can greatly increase the number of variables involved; having multiple levels may exacerbate the issue.  However, it is sometimes possible to simplify the optimality conditions and thereby remove some of the dual variables (as was done for the fan optimization problem).  The NLP sequential relaxation of the MPEC also scales well and handles the complementarity constraints efficiently.  On the whole, the scalability of this approach will depend on the problem in question and how many levels of (mis)perception are of interest.  Hypergames where the individual players' games are differentiable, convex optimization problems are likely to have the greatest amount of success with this approach.  Problems with known or constant active constraint sets will also generally be more amenable to the multi-level optimizations than problems with active sets that change.

\subsection{Future Work}

Some authors writing on Stuxnet suggest the use of heuristics to identify attacks \cite{karnouskos11cp,bencsath12jsr}.  One area of future work would be to take existing research on learning in repeated hypergames \cite{takahashi99cp,gharesifard10cp} and apply it to this context.  For this, we would consider the defender's ability to detect attacks as well as the attacker's behaviour when the non-detection constraint is endogenous rather than exogenous; the attacker budget imposed here would be an example of an exogenous detection constraint.  Another area of interest would be the defender's decision-making more generally.  Given the possibility of attack and the potential consequences (as calculated in this paper), how should a defender respond if an attack is undetectable beforehand?  Hypergame results here should enable us to to evaluate and prescribe control policies more broadly.  Finally, we intend to extend this work to larger, real-world systems.  Working on such systems may then also involve more complicated attacker manipulations, but we anticipate being able to use the same techniques demonstrated here.

\section{Conclusions}
In this paper, we showed how hypergames can be extended to situations with continuous and time-varying variables.  That extension allowed us to consider the effects of adversarial perturbations in an optimal control context, which can give us insights into the control aspects of a Stuxnet-like attack.  Manipulating constraints can be a more effective attacker strategy than directly manipulating objective function parameters; our analytical results showed why we would expect this to be true more generally.  Moreover, the attacker need not change the underlying system in any way to attack successfully -- it may be sufficient to deceive the defender controlling the system.  It is possible to scale our approach up to larger systems, but the ability to do so will depend on the characteristics of the system in question, and we identified several characteristics that will make larger systems amenable to hypergame analysis.

%
% The next two lines define the bibliography style to be used, and the bibliography file.
\bibliographystyle{ieeetr}
\bibliography{bib}

% 
% If your work has an appendix, this is the place to put it.
\appendices

\section{Static Fan Optimization Calculations}
\label{Fan Description}

\subsection{Baseline Problem}

The baseline defender optimization is 

\begin{gather}
\min \limits_{m,p} \theta_1 m + \theta_2 m^2 + \theta_3 p \\
\frac{1}{2} \left[ \left(m-c_m\right)^2 + \left(p-c_p\right)^2 - c_r^2\right] \leq 0
\end{gather}

Note that we include the 1/2 factor in the constraint to cancel out factors of 2 that appear when taking the derivative of that constraint.  The objective function and inequality constraint are both convex functions, so the optimization is a convex problem and the KKT conditions are necessary and sufficient to define problem optima.  If we define the Lagrangian as $L$ and use $\lambda$ as the dual variable associated with the inequality constraint, we get the following optimality conditions:

\begin{gather}
\frac{\partial L}{\partial m} = \theta_1 + 2 \theta_2 m + \left(m-c_m\right) \lambda = 0 \\
\frac{\partial L}{\partial p} = \theta_3 + \left(p-c_p\right)\lambda = 0 \\
\frac{1}{2} \left[ \left(m-c_m\right)^2 + \left(p-c_p\right)^2 - c_r^2\right] \lambda = 0
\end{gather}

For these equations to be satisfied, $\lambda \neq 0$.  Since $\lambda \geq 0$, this ensures that $p < c_p$.  Moreover, if $c_r$ is sufficiently small, $m > 0$, and thus $m < c_m$.  We can then get rid of $\lambda$ by substitution, and we are left with

\begin{gather}
\left(p-c_p\right)\left( \theta_1 + 2 \theta_2 m\right) - \left(m-c_m\right) \theta_3 = 0 \\
\frac{1}{2} \left[ \left(m-c_m\right)^2 + \left(p-c_p\right)^2 - c_r^2\right] = 0
\end{gather}

\subsection{Objective Function Manipulation}
\label{Fan Objfun}

\subsubsection{Attacker Manipulates True/Physical Properties and Defender Knows}
\label{physical properties}

The min-max problem is

\begin{gather}
\min \limits_{m,p} \max \limits_{\Delta \theta_i} \left(\theta_1 + \Delta \theta_1\right) m + \left(\theta_2 + \Delta \theta_2\right) m^2 + \left(\theta_3 + \Delta \theta_3\right)p 
\end{gather}

\begin{gather}
\frac{1}{2} \left[ \left(m-c_m\right)^2 + \left(p-c_p\right)^2 - c_r^2\right] \leq 0 \\
\frac{1}{2} \sum_i \Delta \theta_i^2 \leq \delta_{\theta,max}
\end{gather}

We can use the attacker's KKT conditions to transform the min-max problem into a pure optimization problem.  Define $L$ as the Lagrangian and $\sigma$ as the dual variable associated with the attacker budget constraint.  Then

\begin{gather}
\frac{\partial L}{\partial \Delta \theta_1} = m - \sigma \Delta \theta_1 = 0  \Rightarrow \Delta \theta_1 = \frac{1}{\sigma} m \\
\frac{\partial L}{\partial \Delta \theta_2} = m^2 - \sigma \Delta \theta_2 = 0 \Rightarrow \Delta \theta_2 = \frac{1}{\sigma} m^2 \\
\frac{\partial L}{\partial \Delta \theta_3} = p - \sigma \Delta \theta_3 = 0 \Rightarrow \Delta \theta_1 = \frac{1}{\sigma} p
\end{gather}

For finite $\Delta \theta_i$, we require $\sigma \neq 0$.  Since we know, by definition, that $\sigma \geq 0$, then $\sigma > 0$.  We can therefore parameterize the attacker's decisions in terms of $\tau = 1/\sigma$:

\begin{gather}
\min \limits_{m,p} \max \limits_{\tau} \left(\theta_1 + m \tau \right) m + \left(\theta_2 + m^2 \tau \right) m^2 + \left(\theta_3 + p \tau \right)p \\
\frac{1}{2} \left[ \left(m-c_m\right)^2 + \left(p-c_p\right)^2 - c_r^2\right] \leq 0 \\
\frac{1}{2} \tau^2 \left(m^2 + m^4 + p^2\right) \leq \delta_{\theta,max}
\end{gather}

Given that the last constraint will always be active ($\sigma \neq 0$), we can solve for $\tau$:

\begin{gather}
\tau = \left[ \frac{2 \delta_{\theta,max}}{m^2 + m^4 + p^2} \right]^{\frac{1}{2}}
\end{gather}

We are then left with the following defender optimization:

\begin{gather}
\min \limits_{m,p} \theta_1 m + \theta_2 m^2 + \theta_3 p + \left[ 2 \delta_{\theta,max} \left(m^2 + m^4 + p^2\right) \right]^{\frac{1}{2}} \\
\frac{1}{2} \left[ \left(m-c_m\right)^2 + \left(p-c_p\right)^2 - c_r^2\right] \leq 0
\end{gather}

\subsubsection{Attacker Manipulates Defender Perceptions, Defender Unaware}
\label{perception, defender unaware}

The attacker is solving the problem

\begin{gather}
\max \limits_{\Delta \theta_i} \theta_1 m + \theta_2 m^2 + \theta_3 p \\
\frac{1}{2} \sum_i \Delta \theta_i^2 \leq \delta_{\theta,max}
\end{gather}

\noindent subject to the defender optimization

\begin{gather}
\min \limits_{m,p} \left(\theta_1 + \Delta \theta_1\right) m + \left(\theta_2 + \Delta \theta_2\right) m^2 + \left(\theta_3 + \Delta \theta_3\right)p \\
\frac{1}{2} \left[ \left(m-c_m\right)^2 + \left(p-c_p\right)^2 - c_r^2\right] \leq 0
\end{gather}

The optimality conditions of the defender problem are the same as in the baseline case except that we replace $\theta_i$ with $\hat{\theta}_i = \theta_i + \Delta \theta_i$:

\begin{gather}
\left(p-c_p\right)\left( \hat{\theta}_1 + 2 \hat{\theta}_2 m\right) - \left(m-c_m\right) \hat{\theta}_3 = 0 \\
\frac{1}{2} \left[ \left(m-c_m\right)^2 + \left(p-c_p\right)^2 - c_r^2\right] = 0
\end{gather}

This then results in the optimization problem for the attacker:

\begin{gather}
\max \limits_{\Delta \theta_i,m,p} \theta_1 m + \theta_2 m^2 + \theta_3 p \\
\frac{1}{2} \left[ \left(m-c_m\right)^2 + \left(p-c_p\right)^2 - c_r^2\right] = 0 \ \left(\rho\right) \\
\frac{1}{2} \sum_i \Delta \theta_i^2 \leq \delta_{\theta,max} \ \left(\mu\right) \\
\left(p-c_p\right)\left( \hat{\theta}_1 + 2 \hat{\theta}_2 m\right) - \left(m-c_m\right) \hat{\theta}_3 = 0 \ \left(\lambda\right)
\end{gather}

\noindent where the dual variable for each constraint is shown in brackets next to that constraint.  We can solve this directly as an optimization, but we can also use the optimality conditions to calculate $\Delta \theta_i$.  Define $L$ as the optimization's Lagrangian.  Then

\begin{gather}
\frac{\partial L}{\partial \Delta \theta_1} = -\mu \Delta \theta_1 + \left(p-c_p\right) \lambda = 0 \\
\frac{\partial L}{\partial \Delta \theta_2} = -\mu \Delta \theta_2 + 2 \left(p-c_p\right) m \lambda = 0 \\
\frac{\partial L}{\partial \Delta \theta_3} = -\mu \Delta \theta_3 - \left(m-c_m\right) \lambda = 0
\end{gather}

If we use $\tau = \lambda/\mu$, we get

\begin{gather}
\Delta \theta_1 = \tau \left(p-c_p\right)
\end{gather}

\begin{gather}
\Delta \theta_2 = 2 \tau \left(p-c_p\right) m \\
\Delta \theta_3 = - \tau \left(m-c_m\right) \\
\tau = \left[ \frac{2 \delta_{\theta,max}}{\left( p-c_p\right)^2 + \left( 2 \left(p-c_p\right) m\right)^2 + \left(m-c_m\right)^2} \right]^{\frac{1}{2}} = \left[ \frac{2 \delta_{\theta,max}}{4 \left(p-c_p\right)^2 m^2 + c_r^2} \right]^{\frac{1}{2}}
\end{gather}

We know that $\mu > 0$, but in principle $\lambda$ could be positive or negative.  When we solve the optimization directly (using the parameter values specified in the main body of the paper), we find that $\lambda > 0$.  Given that $p-c_p < 0$ and $m-c_m < 0$, this means that the attacker decreases the defender-perceived values of $\theta_1$ and $\theta_2$ while raising the defender-perceived value of $\theta_3$.  This in turn results in an increased value of $m$ and a decreased value of $p$ (relative to the unperturbed case).  The case where $\lambda < 0$ would correspond to the opposite behaviour.  

Both options produce local maxima, for the attacker, but in general, we would expect the $\lambda > 0$ option to produce a higher payoff: the objective is linear in $p$ but quadratic in $m$, so increasing $m$ would often provide a greater payoff than increasing $p$.  We do not have a proof delineating when this is the case, but we would expect this not to be the case only for small values of $\theta_1$ and $\theta_2$ (relative to $\theta_3$).  For the $c_m$, $c_p$, $c_r$, and $\delta_{\theta,max}$ values considered in this paper, we can empirically verify that for $\theta_1 \in [0.5,3.5]$, $\theta_2 \in [0.5,3.5]$, and $\theta_3 \in [0.5,3.5]$, the $\lambda >0$ option provides a larger attacker payoff.  This domain encompasses all of the true $\theta_i$ values that an attacker could manipulate to produce the $\hat{\theta}_i$ values observed by the defender.  Since the defender knows the attacker budget, if the defender believes that the attacker is attempting to perturb $\theta_i$, the defender can know that the attacker is employing the attack where $\tau > 0$.

\subsubsection{Attacker Manipulates Defender Perceptions, Defender is Aware}

Using the results from the previous section, the defender can reverse engineer the true $\theta_i$ values from the perceived values $\hat{\theta}_i$ if the defender is aware of an attack.  The defender believes that $\hat{\theta}_i$ has been calculated by an attacker solving the problem in Appendix \ref{perception, defender unaware}.  Therefore the defender's optimization is

\begin{gather}
\min \limits_{m,p}  \left(\hat{\theta}_1 - \Delta \theta_1\right) m + \left(\hat{\theta}_2 - \Delta \theta_2\right) m^2 + \left(\hat{\theta}_3 - \Delta \theta_3\right)p \\
\frac{1}{2} \left[ \left(m-c_m\right)^2 + \left(p-c_p\right)^2 - c_r^2\right] \leq 0 \\
\Delta \theta_1 = \tau \left(\hat{p} - c_p\right) \\
\Delta \theta_2 = 2 \tau \left(\hat{p} - c_p\right) \hat{m} \\
\Delta \theta_3 = - \tau \left(\hat{m} - c_m\right)  \\
\tau = \left[ \frac{2 \delta_{\theta,max}}{4 \left(\hat{p}-c_p\right)^2 \hat{m}^2 + c_r^2} \right]^{\frac{1}{2}} \\
\frac{1}{2} \left[ \left(\hat{m}-c_m\right)^2 + \left(\hat{p}-c_p\right)^2 - c_r^2\right] = 0 \\
\left(\hat{p} - c_p\right) \left(\hat{\theta}_1 + 2 \hat{\theta}_2 \hat{m}\right) -  \left(\hat{m} - c_m\right) \hat{\theta}_3 = 0
\end{gather}

\noindent where $\hat{m}$ and $\hat{p}$ are the decision variable values that the defender thinks that the attacker expects the defender to employ.  Note that it is possible to solve

\begin{gather}
\frac{1}{2} \left[ \left(\hat{m}-c_m\right)^2 + \left(\hat{p}-c_p\right)^2 - c_r^2\right] = 0 \\
\left(\hat{p} - c_p\right) \left(\hat{\theta}_1 + 2 \hat{\theta}_2 \hat{m}\right) -  \left(\hat{m} - c_m\right) \hat{\theta}_3 = 0
\end{gather}

\noindent once with the known $\hat{\theta}_i$ values and then use those to calculate $\Delta \theta_i$ -- these do not depend on $m$ or $p$.  Once this calculation has been performed, we are left with the original convex defender optimization problem.

\subsubsection{Attacker Manipulates Defender Perceptions, Defender is Aware, Attacker Knows that Defender is Aware}

This problem leads us to a multi-level optimization problem.  At level 1, we have the attacker optimization

\begin{gather}
\max \limits_{\Delta \theta_i} \theta_1 m + \theta_2 m^2 + \theta_3 p \\
\frac{1}{2} \sum_i \Delta \theta_i \leq \delta_{\theta,max} \\
\hat{\theta}_i = \theta_i + \Delta \theta_i
\end{gather}

At the next level (level 2), we have the defender optimization.  The defender performs his optimization based on the belief that the values he perceives, $\hat{\theta}_i$ has been perturbed by an attacker solving the problem in Appendix \ref{perception, defender unaware}.  Therefore the defender's optimization is

\begin{gather}
\min \limits_{m,p}  \left(\hat{\theta}_1 - \Delta \hat{\theta}_1\right) m + \left(\hat{\theta}_2 - \Delta \hat{\theta}_2\right) m^2 + \left(\hat{\theta}_3 - \Delta \hat{\theta}_3\right)p \\
\frac{1}{2} \left[ \left(m-c_m\right)^2 + \left(p-c_p\right)^2 - c_r^2\right] \leq 0 \\
\Delta \theta_1 = \tau \left(\hat{p} - c_p\right) \\
\Delta \theta_2 = 2 \tau \left(\hat{p} - c_p\right) \hat{m} \\
\Delta \theta_3 = - \tau \left(\hat{m} - c_m\right) \\
\tau = \left[ \frac{2 \delta_{\theta,max}}{4 \left(\hat{p}-c_p\right)^2 \hat{m}^2 + c_r^2} \right]^{\frac{1}{2}} \\
\frac{1}{2} \left[ \left(\hat{m}-c_m\right)^2 + \left(\hat{p}-c_p\right)^2 - c_r^2\right] = 0 \\
\left(\hat{p} - c_p\right) \left(\hat{\theta}_1 + 2 \hat{\theta}_2 \hat{m}\right) - \left(\hat{m} - c_m\right) \hat{\theta}_3 = 0
\end{gather}

The defender's optimality conditions (level 2) are then:

\begin{gather}
\frac{1}{2} \left[ \left(\hat{m}-c_m\right)^2 + \left(\hat{p}-c_p\right)^2 - c_r^2\right] = 0 \\
\left(\hat{p} - c_p\right) \left[\left(\theta_1 + \Delta \theta_1\right) + 2 \left(\theta_2 + \Delta \theta_2\right) \hat{m} \right] - \left(\hat{m} - c_m\right) \left(\theta_3 + \Delta \theta_3\right) = 0
\end{gather}

\begin{gather}
\frac{1}{2} \left[ \left(m-c_m\right)^2 + \left(p-c_p\right)^2 - c_r^2\right] = 0 \\
\left(p - c_p\right) \left[\left(\theta_1 + \Delta \theta_1 - \tau \left(\hat{p} - c_p\right) \right) + 2 \left(\theta_2 + \Delta \theta_2 - 2 \left(\hat{p} - c_p\right) \hat{m}\tau\right) m \right] \nonumber \\
- \left(m - c_m\right) \left(\theta_3 + \Delta \theta_3 + \tau \left(\hat{m} - c_m \right)\right) = 0 \\
\tau = \left[ \frac{2 \delta_{\theta,max}}{4 \left(\hat{p}-c_p\right)^2 \hat{m}^2 + c_r^2} \right]^{\frac{1}{2}}
\end{gather}

The attacker's optimization (level 1) is then

\begin{gather}
\max \limits_{\Delta \theta_i} \theta_1 m + \theta_2 m^2 + \theta_3 p \\
\frac{1}{2} \sum_i \Delta \theta_i \leq \delta_{\theta,max} \\
\frac{1}{2} \left[ \left(\hat{m}-c_m\right)^2 + \left(\hat{p}-c_p\right)^2 - c_r^2\right] = 0 \\
\left(\hat{p} - c_p\right) \left[\left(\theta_1 + \Delta \theta_1\right) + 2 \left(\theta_2 + \Delta \theta_2\right) \hat{m} \right] - \left(\hat{m} - c_m\right) \left(\theta_3 + \Delta \theta_3\right) = 0 \\
\frac{1}{2} \left[ \left(m-c_m\right)^2 + \left(p-c_p\right)^2 - c_r^2\right] = 0
\end{gather}

\begin{gather}
\left(p - c_p\right) \left[\left(\theta_1 + \Delta \theta_1 - \tau \left(\hat{p} - c_p\right) \right) + 2 \left(\theta_2 + \Delta \theta_2 - 2 \left(\hat{p} - c_p\right) \hat{m}\tau\right) m \right] \nonumber \\
- \left(m - c_m\right) \left(\theta_3 + \Delta \theta_3 + \tau \left(\hat{m} - c_m \right)\right) = 0 \\
\tau = \left[ \frac{2 \delta_{\theta,max}}{4 \left(\hat{p}-c_p\right)^2 \hat{m}^2 + c_r^2} \right]^{\frac{1}{2}}
\end{gather}

The attacker optimization may not be convex, but each $\Delta \theta_i$ value corresponds to a single $\left(\hat{m},\hat{p},m,p\right)$ tuple.  We can show by via a sequential analysis.  The equations

\begin{gather}
\frac{1}{2} \left[ \left(\hat{m}-c_m\right)^2 + \left(\hat{p}-c_p\right)^2 - c_r^2\right] = 0 \\
\left(\hat{p} - c_p\right) \left[\left(\theta_1 + \Delta \theta_1\right) + 2 \left(\theta_2 + \Delta \theta_2\right) \hat{m} \right] - \left(\hat{m} - c_m\right) \left(\theta_3 + \Delta \theta_3\right) = 0
\end{gather}

\noindent define a unique solution $\left(\hat{m},\hat{p}\right)$ to an instance of the unaware defender optimization.  By the logic employed in the previous section, we can calculate $\Delta \hat{\theta}_i$ values from that, which then in turn defines $m$ and $p$ as the unique solution to 

\begin{gather}
\frac{1}{2} \left[ \left(m-c_m\right)^2 + \left(p-c_p\right)^2 - c_r^2\right] = 0 \\
\left(p - c_p\right) \left[\left(\theta_1 + \Delta \theta_1 - \tau \left(\hat{p} - c_p\right) \right) + 2 \left(\theta_2 + \Delta \theta_2 - 2 \left(\hat{p} - c_p\right) \hat{m}\tau\right) m \right] \nonumber \\
- \left(m - c_m\right) \left(\theta_3 + \Delta \theta_3 + \tau \left(\hat{m} - c_m \right)\right) = 0 \\
\tau = \left[ \frac{2 \delta_{\theta,max}}{4 \left(\hat{p}-c_p\right)^2 \hat{m}^2 + c_r^2} \right]^{\frac{1}{2}}
\end{gather}

\subsection{Constraint Manipulation}
\label{Fan Constraint}

In this section, for the sake of simplicity, we assume that the attacker is only manipulating the constraint parameters (not the objective function parameters).  These constraint manipulations take the form of

\begin{gather}
\hat{c}_m = c_m + \Delta c_m \\
\hat{c}_p = c_p + \Delta c_p \\
\hat{c}_r = c_r - \Delta c_r
\end{gather}

The attacker is also subject to an attack budget of

\begin{equation}
\frac{1}{2} \left( \Delta c_m^2 + \Delta c_p^2 + \Delta c_r^2 \right) = \frac{1}{2} \sum_i \Delta c_i^2 \leq \delta_{c,max}
\end{equation}

\subsubsection{Attacker Manipulates Defender Perceptions, Defender Unaware}

 The attacker's optimization is

\begin{gather}
\max \limits_{\Delta c_i} \theta_1 m + \theta_2 m^2 + \theta_3 p \\
\frac{1}{2} \sum_i \Delta c_i^2 \leq \delta_{c,max}
\end{gather}

\noindent subject to the defender optimization

\begin{gather}
\min \limits_{m,p} \theta_1 m + \theta_2 m^2 + \theta_3 p \\
\frac{1}{2}\left[ \left(m - c_m - \Delta c_m \right)^2 + \left(p - c_p - \Delta c_p\right)^2 - \left(c_r - \Delta c_r\right)^2 \right] \leq 0
\end{gather}

The defender optimality conditions are

\begin{gather}
\left(p-c_p-\Delta c_p\right) \left(\theta_1 + 2 \theta_2 m\right) - \left(m-c_m-\Delta c_m\right)\theta_3 = 0 \\
\frac{1}{2}\left[ \left(m - c_m - \Delta c_m \right)^2 + \left(p - c_p - \Delta c_p\right)^2 - \left(c_r - \Delta c_r\right)^2 \right] = 0
\end{gather}

\noindent and we are left with the attacker optimization

\begin{gather}
\max \limits_{\delta_i} \theta_1 m + \theta_2 m^2 + \theta_3 p \\
\frac{1}{2} \sum_i \Delta c_i^2 \leq \delta_{c,max} \\
\left(p-c_p-\Delta c_p\right) \left(\theta_1 + 2 \theta_2 m\right) - \left(m-c_m-\Delta c_m\right)\theta_3 = 0 \\
\frac{1}{2}\left[ \left(m - c_m - \Delta c_m \right)^2 + \left(p - c_p - \Delta c_p\right)^2 - \left(c_r - \Delta c_r\right)^2 \right] = 0
\end{gather}

\subsubsection{Attacker Manipulates Defender Perceptions, Defender is Aware}

The defender's optimization is

\begin{gather}
\min \limits_{m,p} \theta_1 m + \theta_2 m^2 + \theta_3 p \\
\frac{1}{2}\left[ \left(m - \hat{c}_m + \Delta c_m \right)^2 + \left(p - \hat{c}_p + \Delta c_p\right)^2 - \left(\hat{c}_r + \Delta c_r\right)^2 \right] \leq 0
\end{gather}

\noindent where $\hat{c}_m$, $\hat{c}_p$, and $\hat{c}_r$ are the quantities that the defender perceives (which the defender believes to have been manipulated by the attacker).  The true parameter values are unknown, but the $\Delta c_i$ values are calculated by solving the attacker problem from the previous section:

\begin{gather}
\max \limits_{\hat{m},\hat{p},\Delta c_i} \theta_1 \hat{m} + \theta_2 \hat{m}^2 + \theta_3 \hat{p} \\
\frac{1}{2} \sum_i \Delta c_i^2 \leq \delta_{c,max} \ \left(\mu\right) \\
\left(\hat{p}-c_p-\Delta c_p\right) \left(\theta_1 + 2 \theta_2 \hat{m}\right) - \left(\hat{m}-c_m-\Delta c_m\right)\theta_3 = 0  \ \left(\sigma\right) \\
\frac{1}{2}\left[ \left(\hat{m} - c_m - \Delta c_m \right)^2 + \left(\hat{p} - c_p - \Delta c_p\right)^2 - \left(c_r - \Delta c_r\right)^2 \right] = 0 \ \left(\rho\right)
\end{gather}

\noindent where the dual variables for each constraint are shown in brackets beside the equation  Define $L$ as the Lagrangian for this problem.  The optimality conditions are then

\begin{gather}
\frac{\partial L}{\partial \hat{m}} = \theta_1 + 2 \theta_2 \hat{m} - \sigma \left( 2 \left(\hat{p} - c_p - \delta_p\right) \theta_2 - \theta_3 \right) - \rho \left(\hat{m} - c_m - \delta_m\right) = 0 \\
\frac{\partial L}{\partial \hat{p}} = \theta_3 - \sigma \left(\theta_1 + 2 \theta_2 \hat{m}\right) - \rho \left(\hat{p} - c_p - \delta_p\right) = 0 \\
\frac{\partial L}{\partial \delta_m} = -\mu \delta_m - \sigma \theta_3 + \rho \left(\hat{m} - c_m - \delta_m \right) = 0 \\
\frac{\partial L}{\partial \delta_p} = - \mu \delta_p + \sigma \left(\theta_1 + 2 \theta_2 \hat{m}\right) + \rho \left(\hat{p} - c_p - \delta_p\right) = 0 \\
\frac{\partial L}{\partial \delta_r} = -\mu \delta_r - \rho \left(c_r - \delta_r\right) = 0
\end{gather}

If we take the first two equations and simplify using $\hat{c}_i$, we get

\begin{gather}
\theta_1 + 2 \theta_2 \hat{m} - \sigma \left( 2 \left(\hat{p} - \hat{c}_p\right) \theta_2 - \theta_3 \right) - \rho \left(\hat{m} - \hat{c}_m\right) = 0 \\
\theta_3 - \sigma \left(\theta_1 + 2 \theta_2 \hat{m}\right) - \rho \left(\hat{p} - \hat{c}_p\right) = 0
\end{gather}

We can set this up to solve for $\sigma$ and $\rho$:

\begin{gather}
\left[ \begin{array}{cc}
2 \left(\hat{p} - \hat{c}_p\right) \theta_2 - \theta_3		&\hat{m} - \hat{c}_m \\
\theta_1 + 2 \theta_2 \hat{m}															&\hat{p} - \hat{c}_p \end{array} \right] 
\left\{ \begin{array}{c} \sigma \\ \rho \end{array} \right\} = \left\{ \begin{array}{c} \theta_1 + 2 \theta_2 \hat{m} \\ \theta_3 \end{array} \right\} %\\
%
%\left\{ \begin{array}{c} \sigma \\ \rho \end{array} \right\}  = \frac{1}{\left( 2 \left(\hat{p} - \hat{c}_p\right) \theta_2 - \theta_3\right) \left( \hat{p} - \hat{c}_p\right) - \left( \hat{m} - \hat{c}_m\right) \left( \theta_1 + 2 \theta_2 \hat{m}\right)} \nonumber \\
%\times \left[ \begin{array}{cc}
%\hat{p} - \hat{c}_p																				&-\left(\hat{m} - \hat{c}_m\right) \\
%-\left(\theta_1 + 2 \theta_2 \hat{m} \right)							&2 \left(\hat{p} - \hat{c}_p\right) \theta_2 - \theta_3 \end{array} \right] \left\{ \begin{array}{c} \theta_1 + 2 \theta_2 \hat{m} \\ \theta_3 \end{array} \right\}
\end{gather}

We can get closed-form expressions for $\sigma$ and $\rho$ by solving this 2x2 system analytically, and we can then use these expressions to calculate our $\Delta c_i$ values in terms of $\tau = 1/\mu$:

\begin{gather}
\Delta c_p = \tau \theta_3 \\
\Delta c_m = \tau \left[ \rho \left(\hat{m} - \hat{c}_m\right) - \sigma \theta_3\right] \\
\Delta c_r = -\tau \rho \hat{c}_r
\end{gather}

The constraint on the sum of squared $\Delta c_i$ values then lets us calculate a value for $\tau$:

\begin{gather}
\tau^2 \left[ \theta_3^2 + \left( \rho \left(\hat{m} - \hat{c}_m\right) - \sigma \theta_3\right)^2 + \rho^2 \hat{c}_r^2\right] = 2 \delta_{c,max} \\
\tau = \left[ \frac{2 \delta_{c,max}}{\theta_3^2 + \left[ \rho \left(\hat{m} - \hat{c}_m\right) - \sigma \theta_3\right]^2 + \rho^2 \hat{c}_r^2} \right]^{\frac{1}{2}}
\end{gather}

\noindent and thus we have closed-form expressions for the $\Delta c_i$ values that can then be plugged back into the original defender optimization without needing to know the true $c_i$ values.  Note that the defender can perform these calculations without knowing the true $c_i$ ahead of time -- it is sufficient to know $\hat{c}_i$.

\subsubsection{Attacker Manipulates Defender Perceptions, Defender is Aware, Attacker Knows that Defender is Aware}

The attacker's optimization is

\begin{gather}
\max \limits_{\Delta c_i} \theta_1 m + \theta_2 m^2 + \theta_3 p \\
\frac{1}{2} \sum_i \Delta c_i^2 \leq \delta_{c,max} \\
\hat{c}_m = c_m + \Delta c_m \\
\hat{c}_p = c_p + \Delta c_p \\
\hat{c}_r = c_r - \Delta c_r
\end{gather}

\noindent subject to the defender optimization from the previous section.  The optimality conditions for the defender's optimization are

\begin{gather}
\left(p-\hat{c}_p+\Delta \hat{c}_p\right)\left(\theta_1 + 2 \theta_2 m\right) - \left(m-\hat{c}_m+\Delta\hat{c}_m\right)\theta_3 = 0 \\
\frac{1}{2}\left[ \left(m - \hat{c}_m + \Delta \hat{c}_m \right)^2 + \left(p - \hat{c}_p + \Delta \hat{c}_p\right)^2 - \left(\hat{c}_r + \Delta \hat{c}_r\right)^2 \right] = 0
\end{gather}

\noindent where

\begin{gather}
\Delta \hat{c}_p = \tau \theta_3 \\
\Delta \hat{c}_m = \tau \left[ \rho \left(\hat{m} - \hat{c}_m\right) - \sigma \theta_3\right] \\
\Delta \hat{c}_r = -\tau \rho \hat{c}_r
\end{gather}

\begin{gather}
\tau = \left[ \frac{2 \delta_{c,max}}{\theta_3^2 + \left[ \rho \left(\hat{m} - \hat{c}_m\right) - \sigma \theta_3\right]^2 + \rho^2 \hat{c}_r^2} \right]^{\frac{1}{2}} \\
\left[ \begin{array}{cc}
2 \left(\hat{p} - \hat{c}_p\right) \theta_2 - \theta_3		&\hat{m} - \hat{c}_m \\
\theta_1 + 2 \theta_2 \hat{m}															&\hat{p} - \hat{c}_p \end{array} \right] 
\left\{ \begin{array}{c} \sigma \\ \rho \end{array} \right\} = \left\{ \begin{array}{c} \theta_1 + 2 \theta_2 \hat{m} \\ \theta_3 \end{array} \right\} \\
\left(\hat{p}-\hat{c}_p\right) \left(\theta_1 + 2 \theta_2 \hat{m}\right) - \left(\hat{m}-\hat{c}_m\right)\theta_3 = 0 \\
\frac{1}{2}\left[ \left(\hat{m} - \hat{c}_m \right)^2 + \left(\hat{p} - \hat{c}_p\right)^2 - \hat{c}_r^2 \right] = 0
\end{gather}

\subsubsection{Attacker Manipulates Defender to Break System, Defender is Unaware}

In this case, the attacker wants to cause the defender to deviate maximally from the constraint $\frac{1}{2} \left[\left(m-c_m\right)^2 + \left(p-c_p\right)^2 - c_r^2\right] \leq 0$ in the interest of causing a catastrophic failure.  The attacker's optimization is

\begin{gather}
\max \limits_{\Delta c_i} \frac{1}{2} \left[\left(m-c_m\right)^2 + \left(p-c_p\right)^2 - c_r^2\right] \\
\frac{1}{2} \sum_i \Delta c_i^2 \leq \delta_{c,max} \\
\left(p-c_p-\Delta c_p\right) \left(\theta_1 + 2 \theta_2 m\right) - \left(m-c_m-\Delta c_m\right)\theta_3 = 0 \\
\frac{1}{2}\left[ \left(m - c_m - \Delta c_m \right)^2 + \left(p - c_p - \Delta c_p\right)^2 - \left(c_r - \Delta c_r\right)^2 \right] = 0
\end{gather}

\subsubsection{Attacker Manipulates Defender to Break System, Defender Knows}

The defender's optimization is

\begin{gather}
\min \theta_1 m + \theta_2 m^2 + \theta_3 p \\
\frac{1}{2} \left[ \left(m-c_m\right)^2 + \left(p-c_p\right)^2 - c_r^2\right] \leq 0
\end{gather}

\noindent where the defender only observes $\hat{c}_i$ and needs to calculate $\Delta c_i$.  The defender knows that the attacker is solving the problem

\begin{gather}
\max \limits_{\Delta c_i} \frac{1}{2} \left[\left(\hat{m}-c_m\right)^2 + \left(\hat{p}-c_p\right)^2 - c_r^2\right] \\
\frac{1}{2} \sum_i \Delta c_i^2 \leq \delta_{c,max} \ \left(\mu\right) \\
\left(\hat{p}-c_p-\Delta c_p\right) \left(\theta_1 + 2 \theta_2 \hat{m}\right) - \left(\hat{m}-c_m-\Delta c_m\right)\theta_3 = 0 \ \left(\sigma\right) \\
\frac{1}{2}\left[ \left(\hat{m} - c_m - \Delta c_m \right)^2 + \left(\hat{p} - c_p - \Delta c_p\right)^2 - \left(c_r - \Delta c_r\right)^2 \right] = 0 \ \left(\rho\right)
\end{gather}

\noindent where the dual variables for each constraint are shown in brackets beside their respective equations.  If we define $L$ as the Lagrangian for that problem, the optimality conditions for this problem are

\begin{gather}
\frac{\partial L}{\partial \hat{m}} = \hat{m} - c_m + \sigma \left( 2 \theta_2 \left(\hat{p} - \hat{c}_p\right) + \theta_3\right) - \rho \left(\hat{m} - \hat{c}_m\right) = 0 \\
\frac{\partial L}{\partial \hat{p}} = \hat{p} - c_p - \sigma \left(\theta_1 + 2\theta_2 \hat{m}\right) - \rho \left(\hat{p} - \hat{c}_p\right) = 0 \\
\frac{\partial L}{\partial \Delta c_m} = -\mu \Delta c_m - \sigma \theta_3 + \rho \left(\hat{m} - \hat{c}_m\right) = 0 \\
\frac{\partial L}{\partial \Delta c_p} = - \mu \Delta c_p + \sigma \left(\theta_1 + 2 \theta_2 \hat{m}\right) + \rho \left(\hat{p} - \hat{c}_p\right) = 0 \\
\frac{\partial L}{\partial \Delta c_r} = - \mu \Delta c_r - \rho \hat{c}_r = 0
\end{gather}

We can solve for $\sigma$, $\rho$, and $\tau = 1/\mu$ to get expressions for $\Delta c_i$.  

\begin{gather}
\Delta c_m = \tau\left(\hat{m} - c_m + 2 \theta_2 \sigma \left(\hat{p} - \hat{c}_p \right) \right) \\
\Delta c_p = \tau \left(\hat{p} - c_p\right) \\
\Delta c_r = - \tau \rho \hat{c}_r \\
\left\{ \begin{array}{c} \sigma \\ \rho \end{array} \right\} = \frac{1}{- \theta_3 \left( \hat{p} - \hat{c}_p\right) - \left(\hat{m} - \hat{c}_m\right) \left(\theta_1 + 2 \theta_2 \hat{m} \right)}
\left[ \begin{array}{cc}
-\left(\hat{p} - \hat{c}_p\right)				&\hat{m} - \hat{c}_m \\
\theta_1 + 2 \theta_2 \hat{m}						&\theta_3 \end{array} \right] \left\{ \begin{array}{c} \hat{m} - c_m \\ \hat{p} - c_p \end{array} \right\} \\
\tau = \left[ \frac{2 \delta_{c,max}}{\left( \hat{m} - c_m + 2 \theta_2 \sigma \left(\hat{p} - \hat{c}_p \right) \right)^2 + \left(\hat{p} - c_p\right)^2 + \rho^2 \hat{c}_r^2 } \right]^{\frac{1}{2}}
\end{gather}

Unlike the result in the power maximization case, solving for $\Delta c_i$ requires knowing $c_i$, not just $\hat{c}_i$.  The defender then has to solve

\begin{gather}
\min \theta_1 m + \theta_2 m^2 + \theta_3 p \\
\frac{1}{2} \left[ \left(m-c_m\right)^2 + \left(p-c_p\right)^2 - c_r^2\right] \leq 0 \\
\hat{c}_m = c_m + \tau\left(\hat{m} - c_m + 2 \theta_2 \sigma \left(\hat{p} - \hat{c}_p \right) \right)  
\label{cm hat eqn} \\
\hat{c}_p = c_p + \tau \left(\hat{p} - c_p\right) 
\label{cp hat eqn} \\
\hat{c}_r = c_r + \tau \rho \hat{c}_r 
\label{cr hat eqn} \\
\left\{ \begin{array}{c} \sigma \\ \rho \end{array} \right\} = \frac{1}{- \theta_3 \left( \hat{p} - \hat{c}_p\right) - \left(\hat{m} - \hat{c}_m\right) \left(\theta_1 + 2 \theta_2 \hat{m} \right)}   \left[ \begin{array}{cc}
-\left(\hat{p} - \hat{c}_p\right)				&\hat{m} - \hat{c}_m \\
\theta_1 + 2 \theta_2 \hat{m}						&\theta_3 \end{array} \right] \left\{ \begin{array}{c} \hat{m} - c_m \\ \hat{p} - c_p \end{array} \right\} \\
\tau = \left[ \frac{2 \delta_{c,max}}{\left( \hat{m} - c_m + 2 \theta_2 \sigma \left(\hat{p} - \hat{c}_p \right) \right)^2 + \left(\hat{p} - c_p\right)^2 + \rho^2 \hat{c}_r^2 } \right]^{\frac{1}{2}} \\
\left(\theta_1 + 2 \theta_2 \hat{m}\right) \left(\hat{p}-\hat{c}_p\right) - \left(\hat{m}-\hat{c}_m\right)\theta_3 = 0 \\
\frac{1}{2}\left[ \left(\hat{m} - \hat{c}_m \right)^2 + \left(\hat{p} - \hat{c}_p\right)^2 - \hat{c}_r^2 \right] = 0
\end{gather}

\noindent where $\hat{c}_i$ is known.  This is actually less complicated than it appears, though.  We can calculate $\hat{m}$ and $\hat{p}$ only knowing $\theta_i$ and $\hat{c}_i$ (which are fixed) and using

\begin{gather}
\left(\theta_1 + 2 \theta_2 \hat{m}\right) \left(\hat{p}-\hat{c}_p\right) - \left(\hat{m}-\hat{c}_m\right)\theta_3 = 0 \\
\frac{1}{2}\left[ \left(\hat{m} - \hat{c}_m \right)^2 + \left(\hat{p} - \hat{c}_p\right)^2 - \hat{c}_r^2 \right] = 0
\end{gather}

With $\hat{m}$ and $\hat{p}$ known, $\sigma$ and $\rho$ are just linear functions of $c_i$, and we have another closed-form expression for $\tau$.  We are then left with three equations in three unknowns: solving (\ref{cm hat eqn})-(\ref{cr hat eqn}) for $c_i$.  These unknowns, moreover, do not depend on $m$ or $p$.

\subsection{Attacker Manipulates Defender to Break System, Defender Knows, Attacker Knows that Defender is Aware}

The attacker optimization is

\begin{gather}
\max \frac{1}{2} \left[ \left(m-c_m\right)^2 + \left(p-c_p\right)^2 - c_r^2\right] \\
\frac{1}{2} \sum_i \Delta c_i^2 \leq \delta_{c,max} \\
\hat{c}_m = c_m + \Delta c_m \\
\hat{c}_p = c_p + \Delta c_p \\
\hat{c}_r = c_r - \Delta c_r
\end{gather}

\noindent subject to the defender optimization

\begin{gather}
\min \theta_1 m + \theta_2 m^2 + \theta_3 p \\
\frac{1}{2} \left[ \left(m-\tilde{c}_m\right)^2 + \left(p-\tilde{c}_p\right)^2 - \tilde{c}_r^2\right] \leq 0
\end{gather}

\noindent where

\begin{gather}
\hat{c}_m = \tilde{c}_m + \tau\left(\hat{m} - \tilde{c}_m + 2 \theta_2 \sigma \left(\hat{p} - \hat{c}_p \right) \right) 
\label{tilde eqn start} \\
\hat{c}_p = \tilde{c}_p + \tau \left(\hat{p} - \tilde{c}_p\right) \\
\hat{c}_r = \tilde{c}_r + \tau \rho \hat{c}_r \\
\left\{ \begin{array}{c} \sigma \\ \rho \end{array} \right\} = \frac{1}{- \theta_3 \left( \hat{p} - \hat{c}_p\right) - \left(\hat{m} - \hat{c}_m\right) \left(\theta_1 + 2 \theta_2 \hat{m} \right)} \left[ \begin{array}{cc}
-\left(\hat{p} - \hat{c}_p\right)				&\hat{m} - \hat{c}_m \\
\theta_1 + 2 \theta_2 \hat{m}						&\theta_3 \end{array} \right] \left\{ \begin{array}{c} \hat{m} - c_m \\ \hat{p} - c_p \end{array} \right\} \\
\tau = \left[ \frac{2 \delta_{c,max}}{\left( \hat{m} - c_m + 2 \theta_2 \sigma \left(\hat{p} - \hat{c}_p \right) \right)^2 + \left(\hat{p} - c_p\right)^2 + \rho^2 \hat{c}_r^2 } \right]^{\frac{1}{2}} \\
\left(\theta_1 + 2 \theta_2 \hat{m}\right) \left(\hat{p}-\hat{c}_p\right) - \left(\hat{m}-\hat{c}_m\right)\theta_3 = 0 \\
\frac{1}{2}\left[ \left(\hat{m} - \hat{c}_m \right)^2 + \left(\hat{p} - \hat{c}_p\right)^2 - \hat{c}_r^2 \right] = 0
\label{tilde eqn end}
\end{gather}

The quantities with tildes on them indicate that these values are what the defender \emph{believes} to be the true values.  Given that (\ref{tilde eqn start})-(\ref{tilde eqn end}) not depend on $m$ or $p$, the defender optimality conditions are

\begin{gather}
\left(\theta_1 + 2 \theta_2 m\right) \left(p-\tilde{c}_p\right) - \left(m-\tilde{c}_m\right)\theta_3 = 0 \\
\frac{1}{2}\left[ \left(m - \tilde{c}_m \right)^2 + \left(p - \tilde{c}_p\right)^2 - \tilde{c}_r^2 \right] = 0
\end{gather}

The full attacker optimization is then

\begin{gather}
\max \frac{1}{2} \left[ \left(m-c_m\right)^2 + \left(p-c_p\right)^2 - c_r^2\right] \\
\frac{1}{2} \sum_i \Delta c_i^2 \leq \delta_{c,max} \\
\hat{c}_m = c_m + \Delta c_m \\
\hat{c}_p = c_p + \Delta c_p \\
\hat{c}_r = c_r - \Delta c_r \\
\hat{c}_m = \tilde{c}_m + \tau\left(\hat{m} - \tilde{c}_m + 2 \theta_2 \sigma \left(\hat{p} - \hat{c}_p \right) \right)  \\
\hat{c}_p = \tilde{c}_p + \tau \left(\hat{p} - \tilde{c}_p\right) \\
\hat{c}_r = \tilde{c}_r + \tau \rho \hat{c}_r \\
\left\{ \begin{array}{c} \sigma \\ \rho \end{array} \right\} = \frac{1}{- \theta_3 \left( \hat{p} - \hat{c}_p\right) - \left(\hat{m} - \hat{c}_m\right) \left(\theta_1 + 2 \theta_2 \hat{m} \right)} \left[ \begin{array}{cc}
-\left(\hat{p} - \hat{c}_p\right)				&\hat{m} - \hat{c}_m \\
\theta_1 + 2 \theta_2 \hat{m}						&\theta_3 \end{array} \right] \left\{ \begin{array}{c} \hat{m} - c_m \\ \hat{p} - c_p \end{array} \right\}
\end{gather}

\begin{gather}
\tau = \left[ \frac{2 \delta_{c,max}}{\left( \hat{m} - c_m + 2 \theta_2 \sigma \left(\hat{p} - \hat{c}_p \right) \right)^2 + \left(\hat{p} - c_p\right)^2 + \rho^2 \hat{c}_r^2 } \right]^{\frac{1}{2}} \\
\left(\theta_1 + 2 \theta_2 \hat{m}\right) \left(\hat{p}-\hat{c}_p\right) - \left(\hat{m}-\hat{c}_m\right)\theta_3 = 0 \\
\frac{1}{2}\left[ \left(\hat{m} - \hat{c}_m \right)^2 + \left(\hat{p} - \hat{c}_p\right)^2 - \hat{c}_r^2 \right] = 0
\end{gather}

\section{Single-Zone HVAC Control Calculations}
\label{HVAC Description}

\subsection{Baseline Problem}

The baseline problem is a power minimization problem for a heater, chiller, and fan together affecting a single zone of interest:

\begin{gather}
\min \sum \limits_{t=1}^{\tau} \left[\theta_1 m^t + \theta_2 \left(m^t\right)^2 + \nu_h c_p m^t \left(T^t_i - d^t T^t_0 - \left(1-d^t\right) T^t_n\right) \right. \nonumber \\
\left. + c_p \nu_n m^t \left(T^t_{s,n} - T^t_s\right) + \nu_c c_p m^t \left(T^t_i - T^t_s\right)\right] \\
- T^t_n + \left(1-\gamma\right) T^{t-1}_n + \beta m^t \left(T^t_{s,n} - T^t_n \right) + \gamma T^t_0 + Q^t_n = 0 \ \left(\lambda^t\right) \\
T^{\tau}_n - T^0_n = 0 \ \left(\mu_{\tau}\right) \\
m^t - m_l \geq 0 \ \left(\sigma^t_{m,l}\right) \\
m_u - m^t \geq 0 \ \left(\sigma^t_{m,u}\right) \\
T^t_{s,n} - T^t_s \geq 0 \ \left(\sigma^t_s\right) \\
T^t_n - T^l_n \geq 0 \ \left(\sigma^t_l\right) \\
T^u_n - T^t_n \geq \ \left(\sigma^t_u\right) \\
d^t - d_l \geq 0 \ \left(\sigma^t_{d,l}\right) \\
d_u - d^t \geq 0 \ \left(\sigma^t_{d,u}\right) \\
T^t_{s,n} - T^l_{s,n} \geq 0 \ \left(\sigma^t_{snl}\right) \\
T^u_{s,n} - T^t_{s,n} \geq 0 \ \left(\sigma^t_{snu}\right) \\
T^t_i - d^t T^t_0 - \left(1-d^t\right) T^t_n \geq 0 \ \left(\sigma^t_{in}\right) \\
T^t_i - T^t_s \geq 0 \ \left(\sigma^t_{is}\right)
\end{gather}

\noindent where the quantities in brackets after each equation are the dual variables corresponding to those equations.  Descriptions of the model variables and the model parameters are given in Tables \ref{HVAC Variables} and \ref{HVAC Parameters}, respectively.  This is a single-zone version of a multi-zone HVAC model.  The goal of the system is to manage the temperature in that single zone.  To do this, it takes in a mixture of air from the zone and from the environment, heats that air (if necessary) at a central heating unit, cools the air (if necessary) with a chiller, and uses a fan to send the air through HVAC ducting.  In a multi-zone model, there would be a local heater for each zone to provide any zone-specific heating; for our single-zone model, we retain the local heater in the interest of maintaining the same model structure.

\begin{table}[htp]
\centering
\caption{HVAC Control Variables}
\label{HVAC Variables}
\vspace{6pt}
\begin{tabular}{c|l}
Quantity										&Description \\
\hline
$m^t$												&Mass flow rate \\
$T^t_i$											&Temperature of air put out by central heating unit \\
$d^t$												&Fraction of HVAC input air coming from environment \\
$T^t_n$											&Zone temperature \\
$T^t_{s,n}$									&Temperature of air supplied to zone \\
$T^t_s$											&Output air temperature of chiller \\
\end{tabular}
\end{table}

\begin{table}[htp]
\centering
\caption{HVAC Model Parameters}
\label{HVAC Parameters}
\vspace{6pt}
\begin{tabular}{c|c|l}
Quantity										&Value 			&Description \\
\hline
$\theta_1$									&0.1				&Fan power consumption parameter \\
$\theta_2$									&0.1				&Fan power consumption parameter \\
$\nu_h$,$\nu_n$,$\nu_c$			&0.99				&Heater and chiller efficiencies \\
$c_p$												&1					&Specific heat of air \\
$T^t_0$											&25					&Environment air temperature at time $t$ \\
$\beta$											&0.0045			&Parameter describing temperature evolution \\
$\gamma$										&8.4e-6			&Parameter describing temperature evolution \\
$Q^t_n$											&0					&Thermal load at time $t$ \\
$\tau$											&varies			&Length of optimization horizon \\
$d_l$,$d_u$									&0.2, 0.5		&Lower and upper bounds on $d^t$ \\
$m_l$,$m_u$									&3.93, 13.1	&Lower and upper bounds on $m^t$ \\
$T^l_n$,$T^u_n$							&21.1, 23.9	&Lower and upper bounds on $T^t_n$ \\
$T^l_{s,n}$,$T^u_{s,n}$			&12.7, 35		&Lower and upper bounds on $T^t_{s,n}$
\end{tabular}
\end{table}

All of the other parameters with $l$ or $u$ in them correspond to lower or upper bounds on their respective variables.

At each time step $t$, the fan consumes power $\theta_1 m^t + \theta_2 \left(m^t\right)^2$ to move air through the system, the chiller consumes power $\nu_c c_p m^t \left(T^t_i - T^t_s\right)$, and the central heating unit consumes power $\nu_h c_p m^t \left(T^t_i - d^t T^t_0 - \left(1-d^t\right) T^t_n\right)$ and the zonal heater consumes power $c_p \nu_n m^t \left(T^t_{s,n} - T^t_s\right)$.  Most of the constraints are variable upper and lower bounds or physical constraints on the system (e.g., the temperature evolution of the room, the heater outputting air that is at least as warm as the air it takes in).  However, there is an endpoint constraint $T^{\tau}_n = T^0_n$ that is essentially a design constraint: at the end of the optimization horizon, the zone needs to be at the same temperature it was at the beginning of the horizon.  If we define the Lagrangian for this problem as $L$, the optimality conditions for this problem are

\begin{gather}
\frac{\partial L}{\partial m^t} = \theta_1 + 2 \theta_2 m^t + \nu_h c_p \left(T^t_i - d^t T^t_0 - \left(1-d^t\right) T^t_n\right) + c_p \nu_n \left(T^t_{s,n} - T^t_s\right) + \nu_c c_p \left(T^t_i - T^t_s\right) \nonumber \\
+ \lambda^t \beta \left(T^t_{s,n} - T^t_n \right) + \sigma^t_{m,u} - \sigma_{m,l}^t = 0 \\
\frac{\partial L}{\partial d^t} = \nu_h c_p m^t \left(T^t_n - T^t_0\right) + \sigma^t_{d,u} - \sigma^t_{d,l} - \sigma^t_{in} \left(T^t_n - T^t_0\right) = 0 \\
\frac{\partial L}{\partial T^t_n} = \nu_h c_p m^t \left(d^t - 1\right) + \lambda^t \left(-1 - \beta m^t\right) - \delta_{t\tau} \mu_{\tau} \nonumber \\
+ \left(1-\gamma\right) \lambda^{t+1} - \sigma^t_{in} \left(d^t-1\right) - \sigma^t_l + \sigma^t_u = 0 \\
\frac{\partial L}{\partial T^t_{s,n}} = c_p \nu_n m^t + \lambda^t \beta m^t - \sigma^t_s - \sigma^t_{snl} + \sigma^t_{snu} = 0 \\
\frac{\partial L}{\partial T^t_s} = -c_p \nu_n m^t - \nu_c c_p m^t + \sigma^t_s + \sigma^t_{is} = 0 \\
\frac{\partial L}{\partial T^t_i} = \nu_h c_p m^t + \nu_c c_p m^t - \sigma^t_{in} - \sigma^t_{is} = 0
\end{gather}

\noindent plus the optimization problem constraints listed above; note that $\delta_{t\tau}$, is a Kronecker delta, so it is 1 if $t=\tau$ and 0 otherwise.  These derivative conditions can simplify down to

\begin{gather}
0 \leq \theta_1 + 2 \theta_2 m^t +  \nu_h c_p \left(T^t_i - d^t T^t_0 - \left(1-d^t\right) T^t_n\right) + c_p \nu_n \left(T^t_{s,n} - T^t_s\right) + \nu_c c_p \left(T^t_i - T^t_s\right) \nonumber \\
+ \lambda^t \beta \left(T^t_{s,n} - T^t_n \right) + \sigma^t_{m,u} \perp  m^t - m_l \geq 0 \\
0 \leq \sigma^t_{m,u} \perp m_u - m^t \geq 0 \\
0 \leq d^t - d_l \perp \left(\sigma_{is} - \nu_c c_p m^t\right) \left(T^t_n - T^t_0\right) + \sigma^t_{d,u} \geq 0 \\
0 \leq \sigma^t_{d,u} \perp d_u - d^t \geq 0 \\
0 \leq \left(\sigma_{is} - \nu_c c_p m^t\right) \left(d^t - 1\right) - \lambda^t \left(1 + \beta m^t\right) -\delta_{t\tau} \mu_{\tau} + \left(1-\gamma\right) \lambda^{t+1} + \sigma^t_u \perp T^t_n - T^l_n \geq 0 \\
0 \leq \sigma^t_u \perp T^u_n - T^t_n \geq 0 \\
0 \leq\lambda^t \beta m^t + \sigma_{is} - \nu_c c_p m^t  + \sigma^t_{snu} \perp T^t_{s,n} - T^l_{s,n} \geq 0 \\
0 \leq \sigma^t_{snu} \perp T^u_{s,n} - T^t_{s,n} \geq 0 \\
0 \leq \nu_h c_p m^t - \left(\sigma_{is} - \nu_c c_p m^t\right) \perp T^t_i - d^t T^t_0 - \left(1-d^t\right) T^t_n \geq 0 \\
0 \leq \nu_n c_p m^t - \left(\sigma_{is} - \nu_c c_p m^t\right) \perp T^t_{s,n} - T^t_s \geq 0 \\
0 \leq \sigma^t_{is} \perp T^t_i - T^t_s \geq 0
\end{gather}

\noindent where $x \perp y$ indicate the complementarity constraint $xy=0$.  In general, this problem is nonconvex.  However, the parameter values specified above result in $m^t=m_l$ and $d^t=d_l$ for all $t$.  If we take these variables as constants, then the objective function and constraints are all linear in the model variables, so the optimization is a linear program, and the optimality conditions are then necessary and sufficient.  More generally, as long as the fan consumes most of the power (as it does in this case), it will be advantageous to keep $m^t$ as small as possible, and as long as the environment temperature differs from the zone temperature, the controller will always be incentivized to minimize the amount of outside air brought in (air that will have to be heated or cooled to reach the zone temperature).

\subsection{Attacker Manipulates Defender Perceptions of Static Parameters}

The attacker can manipulate the defender's perception of $\beta$ and $\gamma$ to maximize power consumption over the entire time horizon:

\begin{gather}
\max \sum \limits_{t=1}^{\tau} \left[\theta_1 m^t + 2 \theta_2 \left(m^t\right)^2 + \nu_h c_p m^t \left(T^t_i - d^t T^t_0 - \left(1-d^t\right) T^t_n\right) \right. \nonumber \\
\left. + c_p \nu_n m^t \left(T^t_{s,n} - T^t_s\right) + \nu_c c_p m^t \left(T^t_i - T^t_s\right)\right] \\
T^t_n = \left(1-\gamma\right) T^{t-1}_n + \beta m^t \left(T^t_{s,n} - T^t_n \right) + \gamma T^t_0 + Q^t_n \\
\hat{\beta} = \beta + \Delta \beta \\
\hat{\gamma} = \gamma + \Delta \gamma
\end{gather}

\begin{gather}
\frac{1}{2} \left[\left(\frac{\Delta \beta}{\beta} \right)^2 + \left(\frac{\Delta \gamma}{\gamma}\right)^2 \right] - \delta_{max} \leq 0 \\
0 \leq T^t_i - T^t_s \perp T^t_i - d^t T^t_0 - \left(1-d^t\right) T^t_n \geq 0
\end{gather}

\noindent subject to the defender optimality conditions

\begin{gather}
0 \leq \theta_1 + 2 \theta_2 m^t +  \nu_h c_p \left(\hat{T}^t_i - d^t T^t_0 - \left(1-d^t\right) \hat{T}^t_n\right) + c_p \nu_n \left(T^t_{s,n} - T^t_s\right) + \nu_c c_p \left(\hat{T}^t_i - T^t_s\right) \nonumber \\
+ \lambda^t \hat{\beta} \left(T^t_{s,n} - \hat{T}^t_n \right) + \sigma^t_{m,u} \perp  m^t - m_l \geq 0 \\
0 \leq \sigma^t_{m,u} \perp m_u - m^t \geq 0 \\
0 \leq d^t - d_l \perp \left(\sigma_{is} - \nu_c c_p m^t\right) \left(\hat{T}^t_n - T^t_0\right) + \sigma^t_{d,u} \geq 0 \\
0 \leq \sigma^t_{d,u} \perp d_u - d^t \geq 0 \\
0 \leq \left(\sigma_{is} - \nu_c c_p m^t\right) \left(d^t - 1\right) - \lambda^t \left(1 + \hat{\beta} m^t\right) -\delta_{t\tau} \mu_{\tau} + \left(1-\hat{\gamma}\right) \lambda^{t+1} + \sigma^t_u \perp \hat{T}^t_n - T^l_n \geq 0 \\
0 \leq \sigma^t_u \perp T^u_n - \hat{T}^t_n \geq 0 \\
0 \leq\lambda^t \hat{\beta} m^t + \sigma_{is} - \nu_c c_p m^t  + \sigma^t_{snu} \perp T^t_{s,n} - T^l_{s,n} \geq 0 \\
0 \leq \sigma^t_{snu} \perp T^u_{s,n} - T^t_{s,n} \geq 0 \\
0 \leq \nu_h c_p m^t - \left(\sigma_{is} - \nu_c c_p m^t\right) \perp \hat{T}^t_i - d^t T^t_0 - \left(1-d^t\right) \hat{T}^t_n \geq 0 \\
0 \leq \nu_n c_p m^t - \left(\sigma_{is} - \nu_c c_p m^t\right) \perp T^t_{s,n} - T^t_s \geq 0 \\
0 \leq \sigma^t_{is} \perp \hat{T}^t_i - T^t_s \geq 0 \\
- \hat{T}^t_n + \left(1-\hat{\gamma}\right) \hat{T}^{t-1}_n + \hat{\beta} m^t \left(T^t_{s,n} - \hat{T}^t_n \right) + \hat{\gamma} T^t_0 + Q^t_n = 0 \\
\hat{T}^T_n - T^0_n = 0 
\end{gather}

Note that the defender conditions are with respect to perceived/perturbed values, not real values (hence the $\hat{}$ on certain quantities).  The defender directly controls most of the variables (e.g., $m^t$, $T^t_s$) but does not directly control $T^t_i$ or $T^t_n$.  These variables are essentially functions of processes governed by other variables.  As such, $\hat{T}^t_i$ and $\hat{T}^t_n$ are the defender's perceived values for these variables.  The true equations governing the evolution of $T^t_n$ and $T^t_i$ are, respectively,

\begin{gather}
T^t_n = \left(1-\gamma\right) T^{t-1}_n + \beta m^t \left(T^t_{s,n} - T^t_n \right) + \gamma T^t_0 + Q^t_n \\
0 \leq T^t_i - T^t_s \perp T^t_i - d^t T^t_0 - \left(1-d^t\right) T^t_n
\end{gather}

The complementarity constraint ensures that $T^t_i$ is the minimum of $T^t_s$ and $d^t T^t_0 + \left(1-d^t\right) T^t_n$.  If $T^t_i > T^t_s$, the defender spends energy to cool the air and if $T^t_i > d^t T^t_0 + \left(1-d^t\right) T^t_n$, the defender spends energy to heat the air.

\subsection{Attacker Manipulates Defender Perceptions of Time-Varying Parameters}

The attacker can also manipulate the defender's perception of $T^t_0$ to maximize power consumption over the entire time horizon:

\begin{gather}
\max \limits_{\Delta T^t_0} \sum_t \left[ \theta_1 m^t + \theta_2 \left(m^t\right)^2 + \nu_h c_p m^t \left(T^t_i - d^t T^t_0 - \left(1-d^t\right) T^t_n\right) \right. \nonumber \\
\left. + c_p \nu_n m^t \left(T^t_{s,n} - T^t_s\right) + \nu_c c_p m^t \left(T^t_i - T^t_s\right)\right] \\
\frac{1}{2} \sum_t \left(\Delta T^t_0\right)^2 \leq \Delta T_{max} \\
\hat{T}^t_0 = T^t_0 + \Delta T^t_0 \\
- T^t_n + \left(1-\gamma\right) T^{t-1}_n + \beta m^t \left(T^t_{s,n} - T^t_n \right) + \gamma T^t_0 + Q^t_n = 0 \\
0 \leq T^t_i - T^t_s \perp T^t_i - d^t T^t_0 - \left(1-d^t\right) T^t_n \geq 0
\end{gather}

\noindent subject to the defender optimality conditions

\begin{gather}
\hat{T}^T_n - T^0_n = 0 \\
0 \leq \theta_1 + 2 \theta_2 m^t +  \nu_h c_p \left(\hat{T}^t_i - d^t \hat{T}^t_0 - \left(1-d^t\right) \hat{T}^t_n\right) + c_p \nu_n \left(T^t_{s,n} - T^t_s\right) + \nu_c c_p \left(\hat{T}^t_i - T^t_s\right) \nonumber \\
+ \lambda^t \beta \left(T^t_{s,n} - \hat{T}^t_n \right) + \sigma^t_{m,u} \perp  m^t - m_l \geq 0 \\
0 \leq \sigma^t_{m,u} \perp m_u - m^t \geq 0 \\
0 \leq d^t - d_l \perp \left(\sigma_{is} - \nu_c c_p m^t\right) \left(\hat{T}^t_n - \hat{T}^t_0\right) + \sigma^t_{d,u} \geq 0 \\
0 \leq \sigma^t_{d,u} \perp d_u - d^t \geq 0 \\
0 \leq \left(\sigma_{is} - \nu_c c_p m^t\right) \left(d^t - 1\right) - \lambda^t \left(1 + \beta m^t\right) -\delta_{t\tau} \mu_{\tau} + \left(1-\gamma\right) \lambda^{t+1} + \sigma^t_u \perp \hat{T}^t_n - T^l_n \geq 0 \\
0 \leq \sigma^t_u \perp T^u_n - \hat{T}^t_n \geq 0 \\
0 \leq\lambda^t \beta m^t + \sigma_{is} - \nu_c c_p m^t  + \sigma^t_{snu} \perp T^t_{s,n} - T^l_{s,n} \geq 0 \\
0 \leq \sigma^t_{snu} \perp T^u_{s,n} - T^t_{s,n} \geq 0 \\
0 \leq \nu_h c_p m^t - \left(\sigma_{is} - \nu_c c_p m^t\right) \perp \hat{T}^t_i - d^t \hat{T}^t_0 - \left(1-d^t\right) \hat{T}^t_n \geq 0 \\
0 \leq \nu_n c_p m^t - \left(\sigma_{is} - \nu_c c_p m^t\right) \perp T^t_{s,n} - T^t_s \geq 0 \\
0 \leq \sigma^t_{is} \perp \hat{T}^t_i - T^t_s \geq 0 \\
- \hat{T}^t_n + \left(1-\gamma\right) \hat{T}^{t-1}_n + \beta m^t \left(T^t_{s,n} - \hat{T}^t_n \right) + \gamma \hat{T}^t_0 + Q^t_n = 0 \\
\hat{T}^T_n - T^0_n = 0
\end{gather}

\end{document}